\newcites{SM}{References for the Supplementary Material}
\def\PaperTitle{%
  Sequential design of multi-fidelity computer experiments: %
  maximizing the rate of stepwise uncertainty reduction}
\def\PaperKWA{%
  Multi-fidelity, %
  Computer experiments, %
  Sequential design of experiments}
\def\PaperKWB{%
  Gaussian process emulator,
  Meta-model, Surrogate model,
  Stochastic simulator}
\definecolor{Blue}{rgb}{0.1,0.1,0.5}
\newcommand{\spacingset}[1]{%
  \renewcommand{\baselinestretch}%
  {#1}\small\normalsize}
\newcommand \Rset  {\mathbb{R}}
\newcommand \Nset  {\mathbb{N}}
\newcommand \Tset  {\mathbb{T}}
\newcommand \Uset  {\mathbb{U}}
\newcommand \Xset  {\mathbb{X}}
\newcommand \ddiff {\mathrm{d}}                    
\newcommand{\D}[1]{\ddiff{#1}}                     
\newcommand \du   {\ddiff u}
\newcommand \dx   {\ddiff x}
\newcommand*{\Fenv}{P_{u_{\rm e}}}    
\newcommand*{\zcrit}{z^{\mathrm{crit}}}            
\newcommand*{\deltaref}{\delta^{\mathrm{ref}}}     
\newcommand \MedErr {\mathrm{MedErr}}
\newcommand \one    {\mathds{1}}
\newcommand \xx     {\tilde x}
\DeclareMathOperator  \prob   {\mathsf{P}}        
\DeclareMathOperator  \esp    {\mathsf{E}}        
\DeclareMathOperator  \Ncal   {\mathcal{N}}       
\DeclareMathOperator  \N      {\mathcal{N}}       
\DeclareMathOperator  \GP     {\mathsf{GP}}       
\DeclareMathOperator  \matern {\mathcal{M}}       
\DeclareMathOperator  \var    {var}               
\DeclareMathOperator  \cov    {cov}               
\DeclareMathOperator* \argmin {argmin}            
\DeclareMathOperator* \argmax {argmax}            
\newtheorem{lemma}{Lemma}
\newtheorem{proposition}[lemma]{Proposition}
\newtheorem{corollary}[lemma]{Corollary}
\theoremstyle{remark}
\newtheorem{remark}{Remark}
\begin{document}

\renewcommand{\thefootnote}{\fnsymbol{footnote}}  

\title{\bf\PaperTitle}

\author{%
  R\'emi \textsc{Stroh}$^{*\dagger}$ %
  \qquad Julien \textsc{Bect}$^{\star}$ %
  \qquad S\'everine \textsc{Demeyer}$^{\dagger}$\\[0.5em]
  Nicolas \textsc{Fischer}$^{\dagger}$ %
  \qquad Damien \textsc{Marquis}$^{\ddagger}$ %
  \qquad Emmanuel \textsc{Vazquez}$^{\star}$\\
  \\
  \small $^{\star}$ Laboratoire des signaux et syst\`emes, CentraleSup\'elec,\\[-2pt]
  \small Univ. Paris-Sud, CNRS, Universit\'e Paris-Saclay, France\\[-2pt]
  \small Email: firstname.lastname@l2s.centralesupelec.fr\\[-2pt]
  \\
  \small $^{\dagger}$ Département sciences des données et incertitudes\\[-2pt]
  \small $^{\ddagger}$ Département comportement au feu et sécurité incendie\\[-2pt]
  \small Laboratoire national de m\'etrologie et d'essais, Trappes, France\\[-2pt]
  \small Email: firstname.lastname@lne.fr
}

\date{}  \maketitle  \thispagestyle{fancy}

\begin{abstract}
  This article deals with the sequential design of experiments for
  (deterministic or stochastic) multi-fidelity numerical simulators,
  that is, simulators that offer control over the accuracy of
  simulation of the physical phenomenon or system under study.
  Very often, accurate simulations correspond to high computational
  efforts whereas coarse simulations can be obtained at a smaller
  cost.
  In this setting, simulation results obtained at several levels of
  fidelity can be combined in order to estimate quantities of interest
  (the optimal value of the output, the probability that the output
  exceeds a given threshold\ldots) in an efficient manner.
  To do so, we propose a new Bayesian sequential strategy called
  Maximal Rate of Stepwise Uncertainty Reduction (MR-SUR), that
  selects additional simulations to be performed by maximizing the
  ratio between the expected reduction of uncertainty and the cost of
  simulation.
  This generic strategy unifies several existing methods, and provides
  a principled approach to develop new ones.
  We assess its performance on several examples, including a
  computationally intensive problem of fire safety analysis where the
  quantity of interest is the probability of exceeding a tenability
  threshold during a building fire.
\end{abstract}

\vfill

\noindent%
{\it Keywords:} \PaperKWA, \\ \PaperKWB
\vfill

\newpage \spacingset{1.4}
\makeatletter
  \renewcommand*\l@section{\@dottedtocline{1}{1em}{3em}}
  \renewcommand*\l@subsection{\@dottedtocline{2}{4em}{4em}}
\makeatother
\tableofcontents

\newpage \spacingset{1.1}
\section{Introduction}
\label{sec:introduction}
\begingroup%
\renewcommand{\thefootnote}{}%
\footnote{%
  A preliminary version of this work was presented in
  \citet{stroh2017sequential}.
  The results contained in this article also appear in the PhD thesis
  of the first author \cite{stroh:thesis}.
}
\endgroup
In the domain of computer experiments, \emph{multi-fidelity} refers to
the idea of combining results from numerical simulations at different
levels of accuracy, %
high-fidelity simulations corresponding to more accurate but, in
general, more expensive computations.
As a representative example of a multi-fidelity simulator, consider
the case of a partial differential equation (PDE) solver based on a
finite element method:
the accuracy of the numerical solution depends among other things on
the fineness of the discretization. High fidelity results are
obtained when the mesh size is small.
Conceptually, a simulator is viewed in this article as a \emph{black box}
with inputs and outputs. %
The parameter that controls the level of accuracy/fidelity---the mesh
size in the case of a PDE solver---is one of the inputs of this black
box, %
alongside others such as design or control variables and environmental
variables \citep[see, e.g.,][]{santner2003design}.
Examples of multi-fidelity simulators can be found in virtually all
areas of engineering and science, including aeronautics
\citep{forrester2007multi}, fire safety \citep{demeyer2017surrogate},
electromechanics \citep{hage2014radial}, electromagnetism
\citep{koziel2013robust}, h\ae{}modynamics \citep{perdikaris2016model}
and many more.

When the objective is to estimate a particular quantity of interest
(QoI), %
such as the optimal value of the design variables (optimization problem)
or the probability that the outputs belong to a prescribed ``safe
region'' (reliability problem), %
multi-fidelity makes it possible to obtain a good approximation of the
QoI with a computational effort lower than what would have been
necessary if only high fidelity simulations had been carried out.
This cost reduction is achieved through the joint use of
multi-fidelity models, which allow simulation results obtained at
different levels of fidelity to be combined, and multi-fidelity
designs of experiments (DoE); see \cite{fernandez2016review} for a
review that covers both aspects.
This article addresses the problem of constructing, sequentially, a
multi-fidelity DoE targeting a given QoI.

We adopt a Bayesian point of view following the
line of research initiated by \citet{sacks1989design}---see also
\citet{currin1991bayesian}, \citet{santner2003design}\ldots---where
prior belief about the simulator is modeled using a
Gaussian process.
The Bayesian approach provides a rich framework for the construction
of sequential DoE, %
which has been abundantly relied upon in previous works dealing with
the case of single-fidelity simulators, where the cost of a simulation
is assumed to be independent of the value of the input variables %
\citep[see, e.g.,][]{kushner64, mockus78, jones1998efficient,
  ranjan2008, villemonteix2009informational, picheny2010adaptive,
  bect2012sequential, chevalier2014fast}.
In this framework, sequential designs are usually constructed by means
of a sampling criterion---also called infill criterion, acquisition
function or merit function---, the value of which indicates
whether a particular point in the input space is promising or not.
The expected improvement (EI) criterion \citep{jones1998efficient} is
a popular example of such a sampling criterion.
The extension of the Bayesian approach to sequential DoE in a
multi-fidelity setting is based on two
ingredients: 1) the construction of prior models for simulators with
adjustable fidelity; 2) the construction of sampling
criteria that take the variable cost of simulations into account.

For the case of \emph{deterministic} multi-fidelity simulators, Gaussian
process-based models have already been proposed in the literature
\citep{kennedy2000predicting, le2013multi, le2014recursive,
  picheny2013nonstationary, tuo2014surrogate}, %
Extensions to \emph{stochastic} simulators have been proposed as well
\citep{stroh2016gaussian, stroh2017assessing}.

Sampling criteria for single-fidelity sequential designs do not
reflect a crucial feature of multi-fidelity simulators: the cost of a
run depends on the value of the inputs (in particular on the one that
controls the fidelity of simulation). Various methods that take into
account the variable cost of the simulations have been proposed for
particular cases, for single-objective unconstrained optimization
\citep{huang2006sequential, swersky2013multi, he2017optimization} and
global approximation \citep{xiong2013sequential, gratiet2015kriging},
notably.

In this article, we provide a general principled methodology to
construct sequential DoE for multi-fidelity simulators %
and, more generally, for simulators where the cost of a simulation
depends on the value of the inputs.
The methodology is applicable to any QoI, and builds on the \emph{Stepwise
Uncertainty Reduction} (SUR) principle \citep[see, e.g.,][and
references therein]{villemonteix2009informational, bect2012sequential,
  chevalier2014fast, bect2016supermartingale}, which unifies many of
the aforementioned sequential DoE for the fixed-cost case.
More precisely, for the variable-cost case, we propose the \emph{Maximal
Rate of Stepwise Uncertainty Reduction} (MR-SUR) principle, %
which consists in constructing a sequential design by maximizing, at
each step, the ratio between the expected reduction of uncertainty %
(to be defined more precisely later on) and the cost of the
simulation.

The article is organized as follows.
Section~\ref{sec:models} reviews Gaussian process modeling for
deterministic simulators, and discuss some possible extensions to
(normally distributed) stochastic simulators.
Section~\ref{sec:doe} first reviews both existing methods of
sequential design for multi-fidelity simulators and the SUR principle
for fixed-cost simulators, and then presents the MR-SUR principle and
its relations with some existing sequential DoE.
Finally, Section~\ref{sec:illustration} illustrates the method and
assesses its performance through several academic examples, %
including a computationally intensive problem of fire safety analysis
where the quantity of interest is the probability of exceeding a
tenability threshold during a building fire.

\section{Gaussian-process models for multi-fidelity}
\label{sec:models}

We consider a computer simulator with input variables
$u \in \Uset \subset \Rset^d$ and one or several scalar outputs, which
are generally obtained after some post-processing steps (e.g., an
aerodynamic drag in a CFD model).
Moreover, we consider that the accuracy, or \emph{fidelity}, of the
computer simulation can be tuned using a parameter $\delta$
that ranges in a discrete or continuous set~$\Tset$.
For instance, $\delta$ is a mesh size in a finite element method.
Such a parameter will be called \emph{fidelity parameter} and can be
viewed as an additional input of the simulator.
We denote by $x = (u, \delta) \in \Xset$ the aggregated vector of
inputs, with $\Xset = \Uset \times \Tset$,
and we assume from now on that the output is scalar.

\subsection{The auto-regressive model for deterministic simulators}
\label{subsec:model:KO}

The so-called \emph{auto-regressive model} of
\citet{kennedy2000predicting} assumes a deterministic simulator with a
finite number~$S$ of levels of increasing fidelity.
Let $\delta_1,\,\ldots, \delta_S$ denote the corresponding values of
the fidelity parameter and set
$\Tset = \{ \delta_1, \ldots, \delta_S \}$.
The simulator is then modeled by a Gaussian process~$\xi$
on~$\Xset = \Uset \times \Tset$,
defined through an auto-regressive relationship between successive
levels:
\begin{equation}
  \left\{
    \begin{aligned}
      \xi(u, \delta_1) & = \eta_1(u),\\
      \xi(u, \delta_s) & = \rho_{s-1}\, \xi(u, \delta_{s-1}) + \eta_s(u),
      \quad 1 < s \le S,
    \end{aligned}\right.
\end{equation}
where $\eta_1$, \ldots, $\eta_S$ are $S$~mutually independent Gaussian
processes, and $(\rho_s)_{1\leq s < S} \in \Rset^{S - 1}$.

The model has been used in numerous applications, where the actual
number~$S$ of levels is most often two
\citep[see, e.g.,][]{%
  forrester2007multi, kuya2011multifidelity, brooks2011multi,
  wankhede2012multi, goh2013prediction, le2014recursive,
  gratiet2015kriging, elsayed2015optimization, thenon2016multi,
  demeyer2017surrogate},
sometimes three \citep[][Section~3.2]{perdikaris2016model}.
In practice, the Gaussian processes~$\eta_s$ are chosen among a family
of Gaussian processes indexed by (hyper-)parameters such as
correlation lengths, regularity parameters, etc., which are estimated
from data (simulation results), by maximum likelihood for instance
(see, e.g., \cite{stein1999interpolation}).
Since the processes $\eta_s$ are assumed independent, there must be
enough simulation results at each level of fidelity, even at the
possibly very expensive highest fidelity levels, to obtain good
estimates of the hyper-parameters---%
which explains perhaps why this model is typically used with a small
number of levels.

\subsection{The additive model for deterministic simulators}
\label{subsec:model:TWY}

Another approach to building Gaussian process models for deterministic
multi-fidelity simulators, which readily applies to the case where
$\Tset$ contains a continuum of levels of fidelity or a large number
of ordered discrete levels of fidelity, has been proposed by
\citet{picheny2013nonstationary} and \citet{tuo2014surrogate}.

Assuming for simplicity that $\Tset = \left[0, \infty\right)$,
with $\delta = 0$ corresponding as in \citet{tuo2014surrogate} to the
highest---often unreachable---level of fidelity,
a Gaussian process~$\xi$ over the product
space~$\Xset = \Uset \times \Tset$ is defined in this approach as the
sum of two independent parts:
\begin{equation}
  \label{eq:TWY:sum}
  \xi(u, \delta) = \xi_0(u) + \varepsilon(u, \delta),
\end{equation}
where $\xi_0$ and $\varepsilon$ are mutually independent Gaussian
processes, and $\varepsilon$ has zero mean and goes to zero in the
mean-square sense when $\delta \to 0$.
In other words, $\var \left( \varepsilon (u, \delta) \right) \to 0$
when $\delta \to 0$, for all~$u$: as a consequence, $\xi$ is a
\emph{non-stationary} Gaussian process on $\Xset \subset \Rset^{d+1}$.
Under this decomposition, $\xi_0$ represents the ``ideal'' version of
the simulator, while $\varepsilon$ represents numerical error.

In both articles, $\xi_0$ is then assumed to be stationary, whereas
the covariance function of $\varepsilon$ is multiplicatively
separable:
for all $u, u' \in \Uset$ and $\delta, \delta' \in \Tset$,
\begin{equation}
  \label{eq:TWY:cov-epsi}
  \cov\left(\varepsilon(u, \delta), \varepsilon(u', \delta')\right)
  = r(\delta, \delta')\, k(u, u'),
\end{equation}
where $k$ is a stationary covariance function on~$\Uset$, and $r$ is a
(non-stationary) covariance function on~$\Tset$ such that
$r(\delta, \delta) \to 0$ when $\delta \to 0$.
As an example of a suitable choice for~$r$, consider the Brownian-type
model proposed by \citet{tuo2014surrogate}:
\begin{equation}
  \label{eq:TWY:Brownian-cov}
  r(\delta, \delta') = \min\{\delta, \delta'\}^L,
\end{equation}
with $L$ a real positive parameter.
Other choices are of course possible.

\newcommand \EquModelTWY {\eqref{eq:TWY:sum}--\eqref{eq:TWY:Brownian-cov}}

\subsection{Extension to stochastic simulators}
\label{subsec:model:stoch}

We now turn to the case of stochastic simulators, that is, simulators
whose output is stochastic, %
as happens for instance when the computer program relies on a Monte
Carlo method \citep[see, e.g.,][]{cochet2014}.
Extending the multi-fidelity Bayesian methodology of
Sections~\ref{subsec:model:KO} and~\ref{subsec:model:TWY} to
stochastic simulators is not straightforward in general, %
since the output at a given input point~$x_i = (u_i, \delta_i) \in \Xset$ is
now a random variable~$Z_i$, the distribution of which is in general
unknown and different at each point in~$\Xset$.
(Several runs at the same input point yield independent and
identically distributed responses.)

We focus in this section on the simpler case where the output~$Z_i$
can be assumed to be normally distributed:
\begin{equation}
  Z_i \,\vert\, \xi, \lambda \;\sim\; \N(\xi(x_i), \lambda(x_i)),
  \label{eq:normal_outputs}
\end{equation}
with mean~$\xi(x_i)$ and variance~$\lambda(x_i)$ possibly depending on the
input point.
In this setting, we propose to extend the multi-fidelity models of
previous sections using independent prior distributions for~$\xi$
and~$\lambda$, %
with either the autoregressive model of Section~\ref{subsec:model:KO}
or the additive model of Section~\ref{subsec:model:TWY} as a prior
for~$\xi$.
Then, since $\lambda$ must have positive values and we want to retain
the simplicity of the Gaussian process framework, %
we suggest modeling the logarithm of the variance,
i.e. $\log(\lambda)$, by a Gaussian process~$\tilde \lambda$,
following \citet{goldberg1998regression}, \citet{kersting2007},
\citet{boukouvalas2009learning} and others.

Under this type of model, the inference task%
---estimating the hyper-parameters of the Gaussian process models
for~$\xi$ and~$\tilde{\lambda}$, and computing posterior
distributions---%
becomes more difficult since neither~$\xi$ nor~$\tilde{\lambda}$ are
directly observable.
\cite{goldberg1998regression} take a fully Bayesian approach and
suggest using a time-consuming Monte-Carlo method.
Other authors have proposed optimization-based approaches, that
simultaneously produce estimates of both the Gaussian processes
hyper-parameters and the unobserved log-variances:
in particular, \citet{kersting2007}
and~\citet{boukouvalas2009learning} propose a method called \emph{most
  likely heteroscedastic GP}, stemming from the
Expectation-Maximization (EM) algorithm %
\citep[see also][for a similar algorithm]{marrel2012global},
while \citet{binois2018} use a more a more sophisticated joint
maximization procedure with relaxation to obtain the joint MAP
(maximum a posterior) estimator.

For the numerical experiments of this article
(Sections~\ref{subsec:illustration_dampedOscillator}
and~\ref{subsec:illustration_fds}) we will take a simpler route,
assuming that the variance~$\lambda$ depends only on the fidelity
level~$\delta$---which is approximately true in the two examples we
shall consider.
In this setting, as long as the number of fidelity levels of interest
is not too large, the value of the variance at these levels can be
simply estimated jointly with the other hyper-parameters of the model;
a general-purpose log-normal prior for the vector of variances is
proposed by \citet{stroh2016gaussian, stroh2017integrating}.

\section{Sequential design of experiment for multi-fidelity}
\label{sec:doe}

\subsection{Existing methods}
\label{subsec:doe_multi}

In the literature of multi-fidelity, a variety of sequential design
algorithms have been proposed.
(See Supplementary Material for a review of \emph{non-sequential}
multi-fidelity designs, which can be used as initial designs for
sequential ones.)

For instance, \citet{forrester2007multi} suggest using the
auto-regressive model of \cite{kennedy2000predicting} and a standard
single-level sequential design at the highest level of fidelity to
select input variables $u\in\Uset$ for the next experiment.
Then, simulations at all levels of fidelity are run for the selected
$u$.
Building on \citet{forrester2007multi}, \citet{kuya2011multifidelity}
suggest a two-stage method: run a large number of simulations at the
low-fidelity level, and then use a sequential design strategy to
select simulations at the high-fidelity level.
In a different spirit, \citet{xiong2013sequential} use Nested Latin
Hypercube Sampling (NLHS) and suggest to double the number of
simulations when going from a level~$\delta^{(s)}$
to~$\delta^{(s + 1)}$, until some cross-validation-based criterion is
satisfied.

More interestingly in the context of this article, some methods have
been proposed that explicitly take into account the simulation cost.
This is typically achieved by crafting a sampling
criterion that takes the form of a ratio between a term which measures
the interest of a simulation at $(x, \delta)$, and the cost of the
simulation \citep{huang2006sequential,
  gratiet2015kriging, he2017optimization}.
For instance, \cite{he2017optimization} propose a global optimization
method using the Expected Quantile Improvement (EQI) of
\cite{picheny2013quantile} and the multi-fidelity model of
\cite{tuo2014surrogate}, and build a new sampling criterion
corresponding to the ratio between the EQI sampling criterion and the
cost of a simulation.

Outside the multi-fidelity literature, a similar idea has been
proposed by \citet{johnson1960information} to design sequential
testing procedures and by \citet{swersky2013multi} for multi-task
optimization.
In both cases, the numerator of the criterion is the expected
reduction of the entropy of the QoI.

In this article, we propose a general methodology to build such
sequential designs, which is not tied to a particular kind of model or
QoI.
The key idea is to measure the potential of a particular design point
using the SUR framework, recalled in Section~\ref{subsec:doe_sur}.
The methodology itself, that we call MR-SUR, is presented in
Sections~\ref{subsec:doe_mrsur}.

\subsection{Stepwise Uncertainty Reduction}
\label{subsec:doe_sur}

We recall here the principle of SUR strategies, introduced in the
design of computer experiments by Vazquez and co-authors
\citep[][\ldots]{vazquez:2007:ds, villemonteix2009informational,
  vazquez2009sequential, bect2012sequential, chevalier2014fast}.
Given a Bayesian model of a simulator and an unknown QoI $Q$,
that is, a particular feature of the simulator that we want to
estimate, a SUR strategy is a Bayesian method for the construction
of a sequence of evaluation locations $X_1, X_2, \ldots\in\Xset$ at
which observations of the simulator will be taken in order to reduce
the uncertainty on $Q$.
(In this section, $\Xset$ denotes a generic input space, not
necessarily of the form~$\Xset = \Uset \times \Tset$.)

The starting point of the construction of a SUR strategy is the
definition of a statistic~$H_{n}$ measuring the residual uncertainty
about $Q$ given past observations $Z_1, \ldots, Z_n$.
Many choices for~$H_n$ are possible for any particular problem, but a
natural requirement \citep{bect2016supermartingale} is that~$H_n$
should be decreasing on average when $n$~increases.
For instance, if~$Q$ is a scalar QoI, $H_n$ could be the posterior
entropy or the posterior variance of~$Q$.
If $Q$ is a function defined on~$\Xset$, as will be the case in
Section~\ref{sec:illustration}, a possible choice is
\begin{equation}
  H_n \;=\; %
  \esp_n \left( \lVert Q - \widehat Q_n \rVert_\mu^2 \right) \;=\; %
  \int_\Xset \var_n \left( Q(x) \right)\, \mu(\dx),
  \label{equ:Hn-L2}
\end{equation}
where $\mu$ denotes a measure~$\Xset$, %
$\lVert h \rVert_\mu^2 = \int_\Xset h(x)^2\, \mu(\dx)$,
$\esp_n$ (resp.~$\var_n$) is the posterior expectation
(resp.~variance) given~$Z_1$, \ldots, $Z_n$, %
and $\widehat Q_n (x) = \esp_n \left( Q(x) \right)$.

Then, given past observations, $X_{n+1}$ is chosen by minimizing the
expectation of the future residual uncertainty:
\begin{equation}
  X_{n+1}= \argmin_{x\in \Xset}
  J_n\left(x\right),\quad\text{with }
  J_n\left(x\right) = \esp_n\left(H_{n + 1} \middle \vert
    X_{n+1} = x\right),
  \label{eq:sur_crit}
\end{equation}
where the expectation is with respect to the outcome~$Z_{n+1}$ of a
new simulation at~$x \in \Xset$.

\medbreak

\noindent\textbf{Example.}
Assume a stochastic multi-fidelity simulator defined over
$\Xset = \Uset \times \Tset$ as in Section~\ref{subsec:model:stoch},
and consider the functional QoI defined on~$\Xset$ by
\begin{equation}
  \label{eq:QoI-prob-fun}
  Q(x) \;=\; %
  \prob \left( Z_x > \zcrit \bigm| \xi, \lambda \right) \;=\; %
  \Phi \left( \frac{ \xi(x) - \zcrit}{\sqrt{\lambda(x)}} \right),
\end{equation}
where $Z_x$ denotes the outcome of a new simulation at~$x$,
$\zcrit \in \Rset$ is a given threshold, and $\Phi$ the cdf of the
standard normal distribution.
Pick some reference level~$\deltaref \in \Tset$ and consider the
residual uncertainty
\begin{equation}
  H_n \;=\; %
  \int_{\Uset} \var_n \left( Q(u, \deltaref) \right)\, \du,
  \label{equ:Hn-L2-U}
\end{equation}
which is a special case of~\eqref{equ:Hn-L2} with $\mu$ equal to
Lebesgue's measure on~$\Uset$ at fixed $\delta = \deltaref$.
Then, using computations similar to those of
\citet{chevalier2014fast}, it can be proved that
\begin{equation}
  J_n(x) = \int_{\Uset} \left[
    \Phi_2\Bigl(a_n(x'), a_n(x');
    \frac{k_n(x', x')}{v_n(x')}\Bigr)
    - \Phi_2\Bigl(a_n(x'), a_n(x');
    \frac{k_n(x, x')^2}{v_n(x) v_n(x')}\Bigr)
  \right]\, \ddiff u',
  \label{eq:notre-SUR-crit}
\end{equation}
where $x' = (u', \deltaref)$, $m_n$ (resp.~$k_n$) denotes the
posterior mean (resp.~covariance) of~$\xi$,
$v_n(x) = \lambda(x) + k_n(x, x)$,
$a_n(x) = \left(m_n(x) - \zcrit\right)/\sqrt{v_n(x)}$, and
$\Phi_2 \left( \cdot, \cdot \,; \rho \right)$ is the cdf of the standard
bivariate normal distribution with correlation~$\rho$.
(For tractability, the variance function $\lambda$ is assumed to be
known in the computation of the criterion.  In practice, the estimated
variance function is plugged in the expression, and the integral
over~$\Uset$ is approximated using a Monte Carlo method.)

\begin{remark}
  See Supplementary Material for a proof of~\eqref{eq:notre-SUR-crit},
  in a more general form which also allows for batches of parallel
  evaluations and integration of~$Q$ with respect to environmental
  variables (all or part of the components of~$u$, depending on the
  application).
\end{remark}

\begin{remark} \label{rem:special-case}
  In the special case~$\lambda \equiv 0$ (deterministic simulator),
  corresponding to $Q(x) = \mathds{1}_{\xi(x) > \zcrit}$, the
  criterion~\eqref{eq:notre-SUR-crit} has been proposed by
  \cite{bect2012sequential} and computed by \citet{chevalier2014fast}.
  The general case is new, to the best of our knowledge.
\end{remark}

The reader is referred, e.g., to \cite{villemonteix2009informational,
  picheny2010adaptive, chevalier13, chevalier2014fast,
  bect2016supermartingale} for other examples of SUR criteria.

\subsection{Maximum Rate of Stepwise Uncertainty Reduction}
\label{subsec:doe_mrsur}

The proposed Maximum Rate of Stepwise Uncertainty Reduction (MR-SUR)
strategy builds on the SUR strategy presented in
Section~\ref{subsec:doe_sur}.
The goal is to achieve a balance between the (expected) reduction of
uncertainty brought by new observations on the one hand, and the cost
of these observations, usually measured by their computation time, on
the other hand.
Denoting by $C:\Xset \to \Rset_{+}$ the cost of an observation of the
simulator, which depends on the fidelity level $\delta \in \Tset$
and/or input variables $u\in\Uset$, the MR-SUR strategy is given by
\begin{equation}
  X_{n+1} \;=\; %
  \argmax_{x \in \Xset}\, \frac{H_n - J_n(x)}{C(x)}
  \;=\; %
  \argmax_{x \in \Xset}\, \frac{G_n(x)}{C(x)}\,,
  \label{eq:mrsur}
\end{equation}
where $G_n(x) = H_n - J_n(x)$ is the \emph{expected uncertainty
  reduction} associated to a future observation at~$x \in \Xset$.
This strategy boils down to a SUR strategy when $C$ is constant.

A few special cases of MR-SUR strategies, adapted to particular models
and estimation goals, have been proposed earlier in the literature.
To the best of our knowledge, the oldest example is the sequential
testing method of \cite{johnson1960information}, where~$H_n$ is the
posterior entropy of the location of faulty component in an electronic
equipment---with a discrete distribution over all possible fault
locations as the underlying model.
More recently, \cite{snoek2012practical} and \cite{swersky2013multi}
have proposed Bayesian optimization procedures of the MR-SUR type, for
unconstrained global optimization problems with variable-cost
noiseless evaluations, corresponding respectively, when to cost is
constant, to the expected improvement \citep{mockus78,
  jones1998efficient} and IAGO \citep{villemonteix2009informational}
algorithms.
Finally, the first sequential design procedure of
\cite{gratiet2015kriging} can also be seen as an approximate MR-SUR
strategy for the approximation problem, where $H_n$ is the posterior
integrated prediction variance.

To illustrate the MR-SUR principle, let us consider a simple simulated
example,
with $\xi$ a Gaussian process on
$\Xset = \Uset \times \Tset = [-0.5,0.5]\times[0,1]$ such that
$\xi \mid m \sim \GP(m, k)$,
$m \sim \mathcal{U}(\Rset)$,
and $k$ as in Section~\ref{subsec:model:TWY}:
\begin{equation*}
  k:((u,\delta), (u^{\prime}, \delta^{\prime}))
  \;\mapsto\; %
  \sigma_0^2 \mathcal{M}_{\nu_0} \left(\frac{|u - u^{\prime}|}{\rho_0}\right)
  \,+\, %
  \sigma_0^2G\min\left\{\delta, \delta^{\prime}\right\}^L
  \mathcal{M}_{\nu_{\varepsilon}}
  \left(\frac{|u - u^{\prime}|}{\rho_{\varepsilon}}\right)\,,
\end{equation*}
where $\mathcal{M}_{\nu}$ stands for the Matérn correlation function
with regularity parameter $\nu$.
The values $m = 0$, $\sigma_0 = 1$, $G = 4$, $L = 2$,
$\nu_0 = \nu_{\varepsilon} = 5/2$, $\rho_0 = 0.3$,
$\rho_{\varepsilon} = 0.1$ are used in the simulations, and all the
parameters except $m$ are assumed to be known in this experiment.
The cost function is $C: (u, \delta) \mapsto 1/\delta$ and the QoI is
\begin{equation*}
  Q =\int_{\Uset} \mathds{1}_{\xi(u,0) > 0}\, \du\,.  
\end{equation*}
Note that the level of highest fidelity $\delta=0$ is not observable
in practice.
A NLHS of size $n = 12 + 6 + 6 + 3$ on the levels
$\delta = 1, 1/2, 1/5, 1/10$ is taken as the observed DoE,
and the outputs~$Z_1, \ldots, Z_n$ are simulated according
to~\eqref{eq:normal_outputs} with constant variance~$\lambda = 0.4^2$.
We compute the functions $J_n$, $G_n$ and $C$ over a regular grid on
$\Uset \times \Tset$, to obtain Figures~\ref{fig:benef_vs_cost}
and~\ref{fig:mean_benef}.

\begin{figure}
  \begin{center}
    \psfrag{Cost}[tc][tc]{Cost of an observation $C(x)$}
    \psfrag{Expected uncertainty}[bc][bc]{$J_n(x)$}
    \psfrag{Pareto line}[bc][bc]{}
    \subfloat[Uncertainty and cost]{
      \includegraphics[width = 0.5\textwidth]{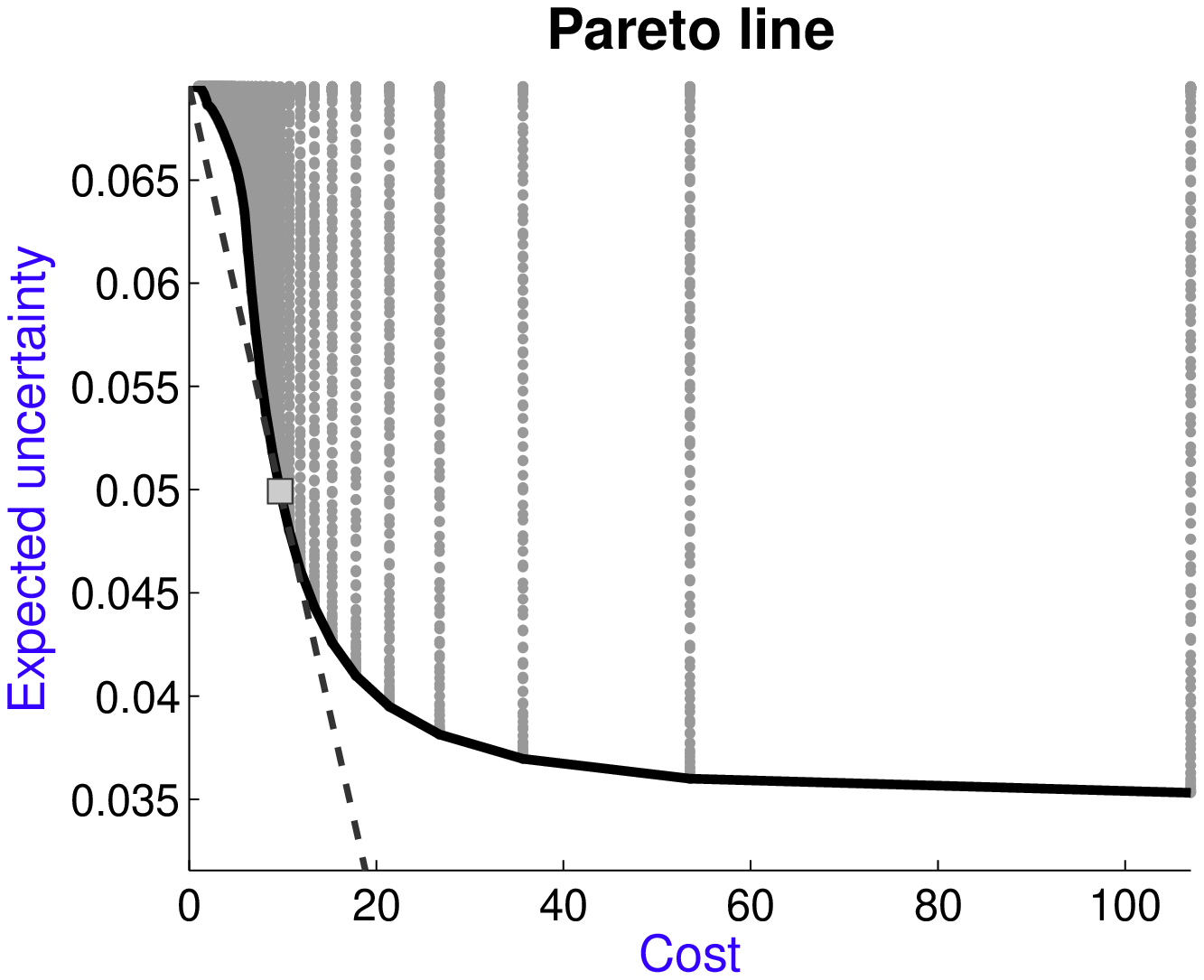}
      \label{fig:benef_vs_cost}}
    \psfrag{Cost}[tc][tc]{Cost of an observation $C(x)$}
    \psfrag{Benefit./Cost}[bc][bc]{$[H_n - J_n(x)]/C(x)$}
    \psfrag{Mean benefit}[bc][bc]{}
    \subfloat[Gain-cost ratio]{
      \includegraphics[width = 0.5\textwidth]{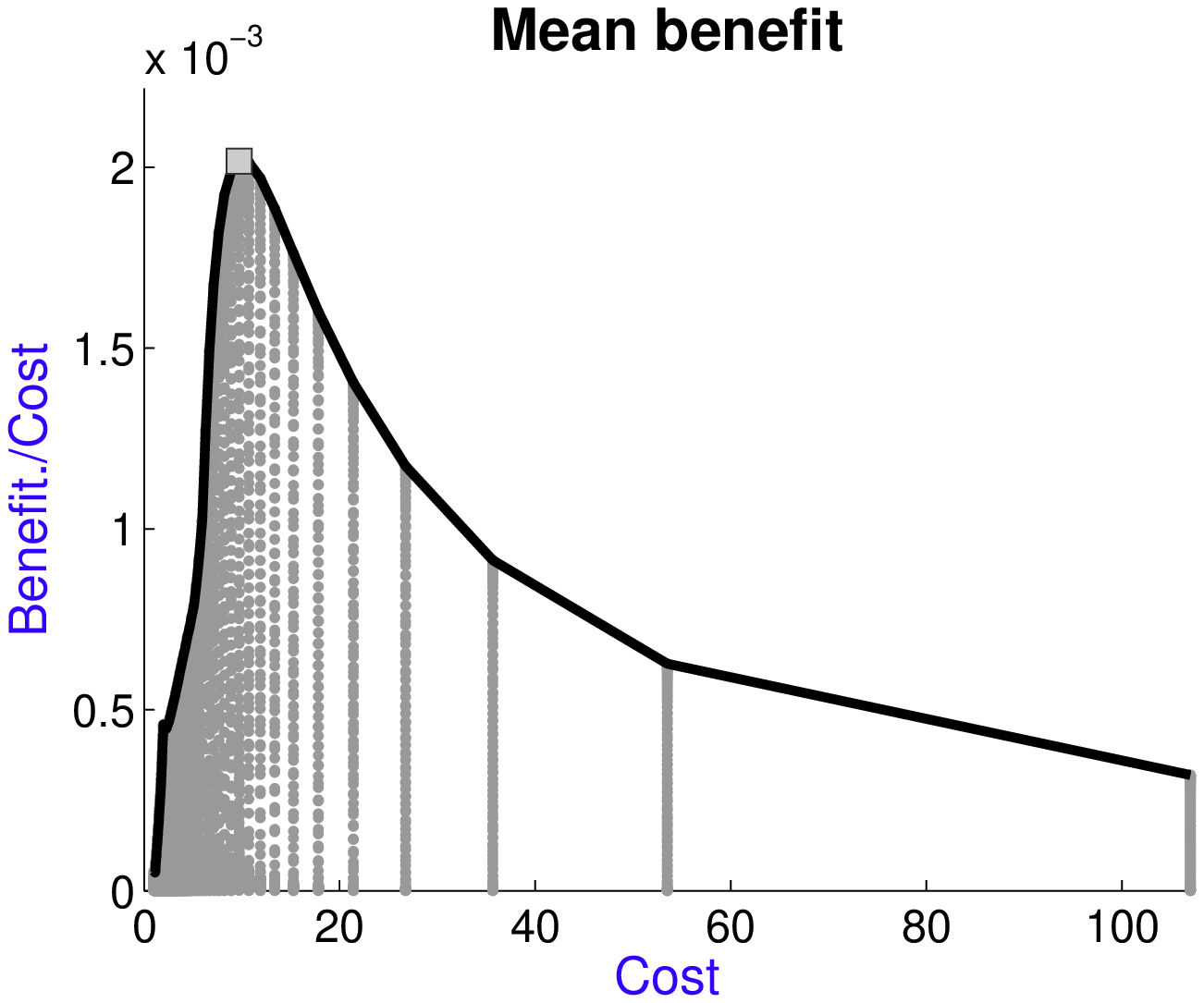}
      \label{fig:mean_benef}}
  \end{center}
  \caption{%
    An example of objective space.%
    (a) Representation of possible designs in the $(C, J)$ plane. %
    (b) Representation in the $(C, G/C)$ plane. %
    Each point corresponds to one point $x$, the solid line is the
    Pareto-optimal points, and the square is the design returned by
    the Maximal Rate of Uncertainty Reduction criterion.%
  }
\end{figure}

Observe on Figure~\ref{fig:benef_vs_cost} that, for each cost value
(corresponding to a fixed fidelity level), there is a range of points
that yield more or less expected uncertainty reduction.
Good observation points lie on the Pareto front (in solid black line),
that is, the set of points for which there is no larger expected
uncertainty reduction at lower cost.
The MR-SUR strategy selects an observation location that correspond to
the maximum of the ``slope'' of the Pareto front.

Figure~\ref{fig:example_seqdes} shows the sequence of Pareto fronts as
more observation points are added in the design using~\eqref{eq:notre-SUR-crit}.
The horizontal axis is the total cost, so that the left-ends of the
Pareto fronts are shifted.
Observe for instance that the points numbered~$3$ to~$9$, selected
using MR-SUR, achieve a larger uncertainty reduction at lower cost that
what would have been achieved if we had selected only one expensive
observation.

\begin{figure}
  \begin{center}
    \psfrag{Cost}[tc][tc]{Total cost}
    \psfrag{Expected uncertainty}[bc][bc]{$J_n(x)$}
    \psfrag{Pareto line}[bc][bc]{}
    
    \includegraphics[width = 0.7\textwidth]{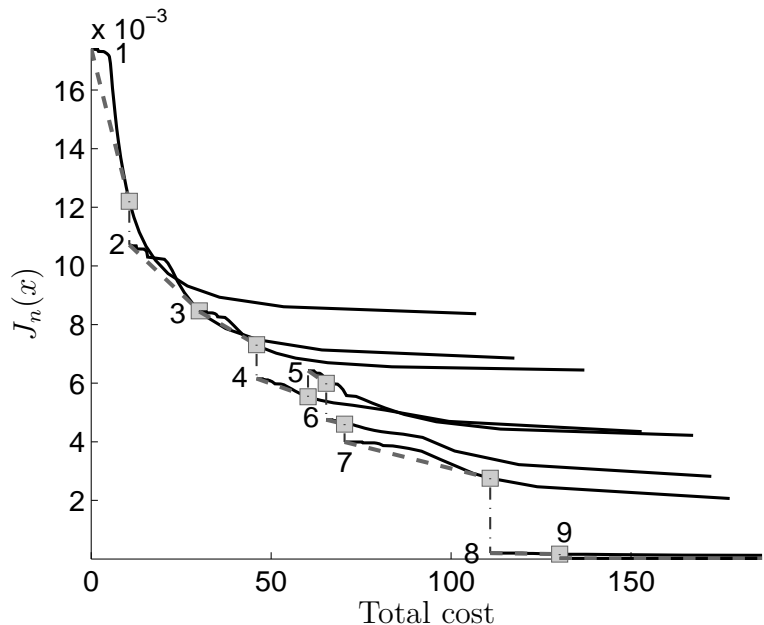}
  \end{center}
  \caption{%
    The sequential Pareto fronts in the space $(C, J)$ as function of
    the total cost of the design on an example of sequential MR-SUR
    algorithm.%
  }
  \label{fig:example_seqdes}
\end{figure}

\section{Numerical results}
\label{sec:illustration}

\subsection{Setup of the experiments}

In each example, we consider a multi-fidelity simulator for which
simulation cost $C$ depends on~$\delta$ alone, and is assumed to be
known.
Some common features of all three numerical experiments are presented
in this section.

\emph{Initial DoE.}
A nested Latin hypercube sample (NLHS) is used as an initial design.
More specifically, we use the algorithm developed by
\cite{qian2009nested}, with an additional maximin optimization at each
level to obtain better space-filling properties
\citep[see][Section~2.2.3 for details]{stroh:thesis}.

\emph{GP modeling.}
In each example, a multi-fidelity GP model of the type described in
Section~\ref{sec:models} is used.
The posterior distribution of the parameters is initially sampled
using an adaptive Metropolis-Hastings algorithm
\citep{haario2001adaptive} and then updated at each iteration by
sequential Monte Carlo \citep[see, e.g.,][]{chopin2002sequential}.
More details about the particular GP model that is used, and the prior
distribution on the parameters, are provided inside each example
section.

\emph{Optimization of the sampling criterion.}
At each iteration of a SUR or MR-SUR strategy, a new simulation point
is selected according to~\eqref{eq:sur_crit} or~\eqref{eq:mrsur}.
This step involves a an optimization of the SUR or MR-SUR criterion
criterion, which is carried out in the experiments of this article
using a simple two-step approach:
the criterion is first optimized by exhaustive search on a regular
grid over $\Uset\times\Tset$, and then a local optimization is
performed starting from the best point in the grid.
Other approaches have been proposed in the literature, that would be
more efficient in higher-dimensional problems \citep[see,
e.g.,][]{Feliot:BMOO}.

\emph{Other computational details.}
All integrals are approximated by Monte-Carlo methods.
SUR and MR-SUR criteria are evaluated using the Maximum A Posteriori
(MAP) estimator of the parameters---obtained by local optimization from
the best point in the MCMC/SMC sample---in a plug-in manner.
(A fully Bayesian approach could be considered in principle, but would
lead to much higher computational complexity.)

\subsection{A one-dimensional example}
\label{sec:one-dim-example}

Consider as a first (toy) example the two-level deterministic
simulator defined for $u \in \left[ 0; 1 \right]$ and
$\delta \in \left\{ 1, 2 \right\}$ by the analytical formulas
\citep{forrester2007multi}
\begin{equation}
  \left\{
  \begin{aligned}
    f_1(u) &= f(u, 1) = 0.5\, (6u - 2)^2\sin(12u - 4) + 10\, (u - 0.5),\\
    f_2(u) &= f(u, 2) = (6u - 2)^2\sin(12u - 4) + 10,
  \end{aligned}
  \right.
\end{equation}
and assume that computing~$f_2$---hereafter referred to as the ``high
fidelity'' function---is four times more costly than computing~$f_1$,
e.g., $C(2) = 1$ and $C(1) = \frac{1}{4}$.
Note that the two functions are related by
$f_2(u) = 2\, f_1(u) - 20(u - 1)$,
which makes them perfect candidates for the autoregressive model
presented in Section~\ref{subsec:model:KO}.
The goal in this example is to estimate the set
\begin{equation*}
  \Gamma = \left\{ f_2 > \zcrit \right\}
  = \left\{ u\in \Uset,\, f_2(u) \geq \zcrit \right\}
\end{equation*}
with $\zcrit = 10$.
The performance of MR-SUR for this task will be compared with that
of SUR strategies operating at the low-fidelity level only (LF-SUR) or
at the high-fidelity level only (HF-SUR).

In this experiment, all three sequential strategies start with the
same multi-fidelity initial design, and use the same Gaussian process
prior and the same measure of uncertainty~$H_n$.
The initial design consists of six observations at the low-fidelity
level and three at the high-fidelity level,
for a total of~$n = 9$ observations, corresponding to an initial
budget of $6 \times \frac{1}{4} + 3\times 1 = 4.5$ cost units.
A supplementary budget of $9.0$~cost units is assumed to be
available for the sequential design.
The autoregressive model of Section~\ref{subsec:model:KO} is used,
with Matérn 5/2 covariance functions and weakly informative priors on
the parameters (see Supplementary Material for details).
The uncertainty on~$\Gamma$ is quantified using the uncertainty
measure~\eqref{equ:Hn-L2-U}.
In the special case of a deterministic simulator, we have
(cf. Remark~\ref{rem:special-case})
\begin{equation*}
  H_n = \int_0^1 p_n(u)\, \left(1 - p_n(u)\right) \du,
\end{equation*}
where $p_n(u) = \prob_n\left( \xi_2(u) \geq \zcrit \right)$ is the
posterior mean of~$\mathds{1}_{\xi_2(u) \geq \zcrit}$, and
$p_n(u)\, \left(1 - p_n(u)\right)$ its posterior variance.

\begin{figure}
  \quad
    \begin{subfloatrow}
      \subfloat[Comparison between designs of experiment]{%
        \label{fig:ex1_errors}
        \psfrag{Cost}[tc][tc]{$c_n$}%
        \psfrag{||p(x, tHF) - p*(x, tHF)||}[bc][bc]{$\MedErr_n$}%
        \psfrag{L2-error on the probability function}[bc][bc]{}%
        \includegraphics[width = 0.45\textwidth]{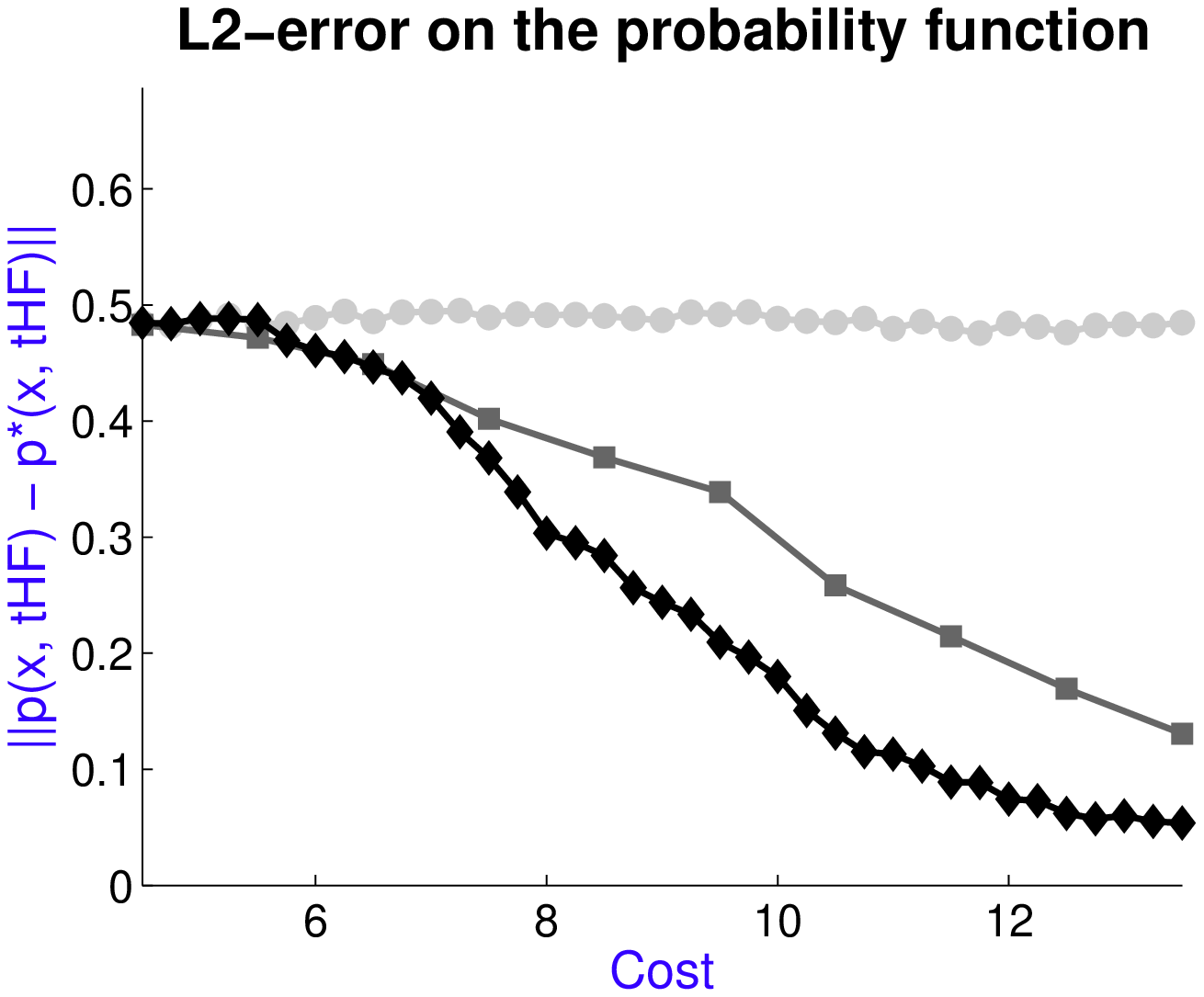}
      }
    \renewcommand{\arraystretch}{0.9}
    \subfloat[MR-SUR: number of LF/HF eval.]{%
      \scriptsize
      \hspace{6em}
      \begin{tabular}[b]{|c|c|c|}
        \hline
        LF & HF & freq.\\
        \hline
        36 & 0 & 0 \\
        32 & 1 & 0 \\
        28 & 2 & 14 \\
        24 & 3 & 18 \\
        20 & 4 & 10 \\
        16 & 5 & 9 \\
        12 & 6 & 1 \\
        8 & 7 & 4 \\
        4 & 8 & 3 \\
        0 & 9 & 1 \\
        \hline
      \end{tabular}
      \hspace{6em}
      \label{fig:ex1_histobs}%
    }
  \end{subfloatrow}
  \hfill
      \caption{%
      The one-dimensional experiment.  %
      (a) Median estimation error as a function of the cost.
      Light-gray disks: LF-SUR; %
      dark-gray squares: HF-SUR; %
      black diamonds: MR-SUR.
      (b) Number of LF/HF evaluations in the MR-SUR strategy.
      The last column indicates how many times, in the 60 repetitions,
      a given combination appears (recall that HF evaluations are
      four times as costly as LF ones).
    }
\end{figure}

The experiment is repeated $R = 60$ times---the simulator is
deterministic, but randomness in the result comes from both the
initial DoE and the use of a Monte Carlo procedure to sample from the
posterior of the parameters.
Figure~\ref{fig:ex1_errors} presents the evolution of the median
estimation error, defined as
$\MedErr_n = \mathrm{median}_{1 \le r \le R}\, \lVert p^{(r)}_n -
\one_{f_2 > \zcrit} \rVert$ with $\lVert \cdot \rVert$ the $L^2$-norm
on~$\Uset$, as function of the cost $c_n = \sum_{i \le n} C(\delta_i)$.
First, it appears clearly that high fidelity evaluations are needed:
the LF-SUR strategy achieves no significant error reduction with
respect to the initial design.
Second, we observe that the combination of low- and high-fidelity
evaluations chosen by the MR-SUR strategy is more efficient, on
average, than a purely high-fidelity sequential design.
The actual number of evaluations on each level is summarized, for the
60 repetitions, in Table~\ref{fig:ex1_histobs}: the MR-SUR strategy
tends to use between two and five high-fidelity evaluations.
(The recommendation of \citet{xiong2013sequential}---observing the
low-fidelity level twice as many times as the high-fidelity
one---would correspond here to six high-fidelity evaluations.)

\subsection{Random damped harmonic oscillator}
\label{subsec:illustration_dampedOscillator}

We now assess the performance of MR-SUR on an example proposed
by~\citet{au2001estimation}.
We consider a random damped harmonic oscillator, whose displacement
$X$ is the solution of the stochastic ordinary differential equation
\begin{equation}
  \ddot{X}(t) + 2\zeta\omega_0\dot{X}(t) + \omega_0^2X(t) = W(t),
  \quad t \in [0, t_{\mathrm{end}}], \quad \dot{X}(0) = 0,\quad X(0) = 0\,,
  \label{eq:ex2_stochas_eq}
\end{equation}
where $\omega_0$ is the resonance frequency of the oscillator, $\zeta$
is a damping coefficient, $W$ is a Gaussian white noise and
$t_{\mathrm{end}} = 30\,\text{s}$.
The solution of~\eqref{eq:ex2_stochas_eq} can be approximated using an
exponential Euler scheme with time step $\delta > 0$ (more details can
be found in the Supplementary Material):
we denote by $X_k^{(\delta)}$ the resulting approximation of $X$ at
time steps $t_{k} = k \delta$, $k \in \Nset$,
$k \le K_\delta = \lfloor t_{\mathrm{end}} / \delta \rfloor$.
We will be interested in the maximal log-displacement
$\max_{t \le t_{\mathrm{end}}} \log \left| X(t) \right|$, that we
approximate by
$Z(\omega_0, \zeta, \delta) = \max_{k \le K_\delta} \log \bigl(
|X^{(\delta)}_k| \bigr)$.

We view the mapping
$x=(\omega_0, \zeta, \delta) \in \Rset_+^{3} \mapsto Z(\omega_0,
\zeta, \delta)$ as a multi-fidelity stochastic simulator,
where~$\delta$ controls the level of fidelity.
In this problem, the QoI is the function
$Q: (\omega_0, \zeta) \mapsto \prob(Z(\omega_0, \zeta, \deltaref) >
\zcrit)$, where $\deltaref = 0.01~\text{s}$ denotes the level of
highest fidelity
and $\zcrit = -3$ is a given critical threshold.
The computational cost of $Z$ is an affine function of $1/\delta$:
$C(\delta) = a/\delta + b$.
After normalization to have $C(\deltaref) = 1$, the coefficients are
$a = 0.0098$ and $b = 0.0208$.

\begin{table}
  \begin{center}
    \begin{tabular}{l|*{10}{c}}
      Level $\delta$ & $1\,\text{s}$ & $0.51\,\text{s}$ & $1/3\,\text{s}$
      & $0.25\,\text{s}$ & $0.2\,\text{s}$
      & $1/6\,\text{s}$ & $0.1\,\text{s}$
      & $0.05\,\text{s}$ & $0.02\,\text{s}$ & $0.01\,\text{s}$\\
      \hline
      Cost$^{-1}$ & 32.7 & 24.8 & 19.9 & 16.7
                         & 14.3 & 12.6 & 8.4 & 4.6 & 2 & 1\\
      \hline 
      Initial DoE & 180 & 60 & 20 & 10 & 5 & 0 & 0 & 0 & 0 & 0\\
    \end{tabular}
  \end{center}
  \caption{%
    Levels of fidelity considered in this example.
    The highest level of fidelity is $\delta = 0.01\,\text{s}$.
  }
  \label{tab:ex2_levels}
\end{table}

A good approximation of the output distributions is obtained if we
assume
$Z(\omega_0,\zeta,\delta) \mid \xi, \lambda \sim \N(\xi(x),
\lambda(\delta))$, where the variance only depends on the fidelity
level.
This assumption makes it possible to write
\begin{equation*}
  Q(\omega_0, \zeta) =  \Phi\left(
    \frac{\xi(x) - \zcrit}{\sqrt{\lambda(\deltaref)}}
  \right).
\end{equation*}
The mean function~$\xi$ is modeled by the additive Gaussian process
model~\EquModelTWY of Section~\ref{subsec:model:TWY}, where the
variance~$\lambda$ is log-Gaussian as in
Section~\ref{subsec:model:stoch}
and the prior distributions for the hyper-parameters are set as in
\citet{stroh2017integrating}.
The posterior mean $\widehat Q_n = \esp_{n} (Q)$ is used to
estimate~$Q$.

In this example we consider $S = 10$ levels,
and the initial design is an NLHS on the five first levels.
The different levels of fidelity, their costs, and the initial design
are summarized in Table~\ref{tab:ex2_levels}.
The total cost of the initial design is $9.88$.
The total simulation budget, taking into account the initial budget,
is set to $20$.
We also use a very high simulation budget to compute a reference value
for $Q$, which will be used to assess the estimation error.

\begin{figure}
  \begin{center}
    \psfrag{Cost}[tc][tc]{$c_n$}
    \psfrag{||p(x, tHF) - p*(x, tHF)||}[bc][bc]{$\MedErr_n$}
    \psfrag{Error-L2 on the probability function}[tc][tc]{}
    \includegraphics[width = 0.6\textwidth]{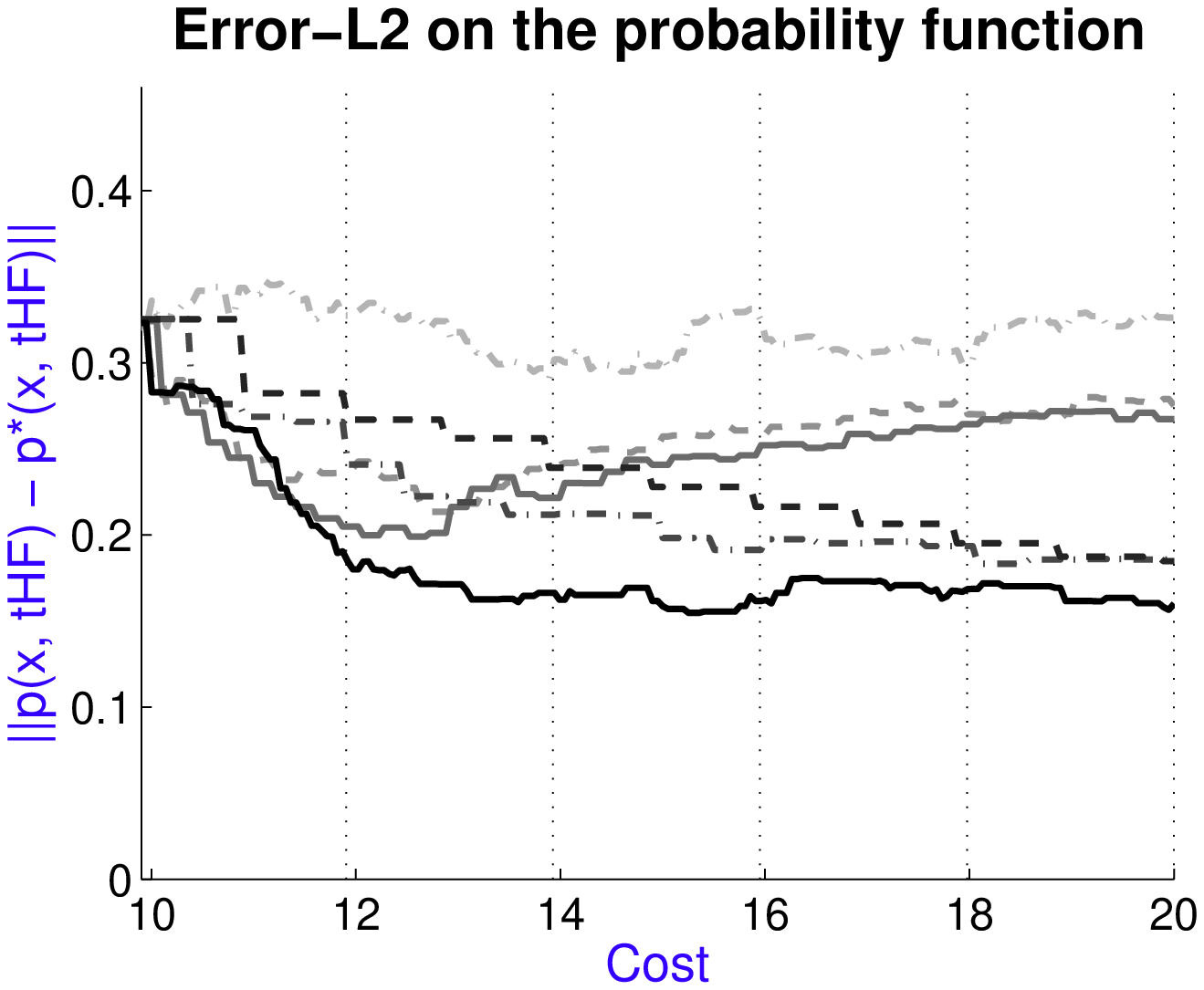}
    \psfrag{SUR|dt=0.167}[cl][cl]{\small SUR at $\delta = 0.16\,\text{s}$}
    \psfrag{SUR|dt=0.1}[cl][cl]{\small SUR at $\delta = 0.1\,\text{s}$}
    \psfrag{SUR|dt=0.05}[cl][cl]{\small SUR at $\delta = 0.05\,\text{s}$}
    \psfrag{SUR|dt=0.02}[cl][cl]{\small SUR at $\delta = 0.02\,\text{s}$}
    \psfrag{SUR|dt=0.01}[cl][cl]{\small SUR at $\delta = 0.01\,\text{s}$}
    \psfrag{MSUR}[cl][cl]{\small MR-SUR}
    \subfloat{
      \includegraphics[width = 0.38\textwidth]{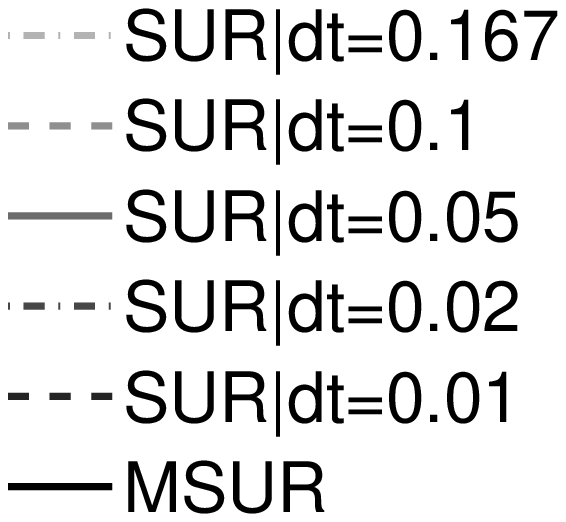}}
    \addtocounter{subfigure}{-1}
  \end{center}
  \caption{%
    Median estimation error as a function of the cost for
    the oscillator test case.
  }
  \label{fig:ex2}
\end{figure}

We compare the MR-SUR strategy, using the integrated posterior
variance~\eqref{equ:Hn-L2-U} as the uncertainty measure, to five
different SUR strategies based on the same uncertainty measure.
All of them are started with the same initial design, and each SUR
strategy corresponds to sampling on only one of the five
highest-fidelity levels.
The experiment is repeated 48 times with different initial designs,
and the strategies are compared using the median estimation error as
in Section~\ref{sec:one-dim-example}.
The result is shown on Figure~\ref{fig:ex2}:
the MR-SUR strategy is never far from the best strategy for any given
budget, and actually outperforms all the other (fixed-level SUR)
strategies as soon as~$c_n$ larger that approximately~11.5.
Additional experiments, presented in the Supplementary Material, also
show the benefit of using MR-SUR with batches of parallel evaluations
on this example.

\subsection{A fire safety example}
\label{subsec:illustration_fds}

In this section, we illustrate the MR-SUR strategy on a fire safety
application.
The goal is to assess the safety of a
$20\,\text{m} \times 12\,\text{m} \times 16\,\text{m}$
parallelepiped-shaped storage facility,
with two $2\,\text{m} \times 1\,\text{m}$ open doors and two
$2\,\text{m} \times 2\,\text{m}$ open windows.
The propagation of smoke and heat is simulated using Fire Dynamics
Simulator \citep[FDS; see][]{mcgrattan2010fire}, a state-of-the-art
CFD software for fire engineering, which solves the transport
equations using finite difference methods.
The fire is located at the center of the room, and burns polyurethane.
To assess fire safety, the values of several physical quantities are
compared against regulatory thresholds---in this illustration, we
focus on one of them only, called visibility and hereafter denoted
by~$V$.
According to the \citet{iso2012life}, visibility must remain greater
than $\zcrit = 5~\text{m}$ to ensure safety during an evacuation.

\begin{table}
  \begin{center}
    \begin{tabular}{l|*{4}{c}}
      Level~$\delta$
      & $50~\text{cm}$
      & $33~\text{cm}$
      & $25~\text{cm}$
      & $20~\text{cm}$
      \\
      \hline
      Real cost
      & $69~\text{min}$
      & $6~\text{h}$
      & $20~\text{h}$
      & $49~\text{h}$
      \\
      Normalized cost
      & 1/42
      & 1/8
      & 1/2.5
      & 1
      \\
      \hline 
      Initial design
      & 90
      & 30
      & 10
      & 0
      \\
    \end{tabular}
  \end{center}
  \caption{The levels of fidelity on FDS.}
  \label{tab:fds_levels}
\end{table}

Our FDS-based simulator will be treated as a stochastic simulator%
\footnote{%
  although is is actually, strictly speaking, a deterministic
  simulator, since the seed of the random number generator is fixed by
  the software; cf. \cite{stroh2017assessing} for details.}
with nine input variables:
three environmental variables (external temperature~$T_{\rm ext}$,
atmospheric pressure~$P_{\rm atm}$ and ambient
temperature~$T_{\rm amb}$) denoted by $u_{\rm e}\in\Rset^{3}$,
five ``scenario variables'' (fire growth rate~$\alpha$, fire
area~$A_{\rm f}$, maximal heat release rate~$\dot{Q}_{\rm h}''$, total
released energy~$q_{\rm f_d}$ and soot yield~$Y_{\rm soot}$) denoted
by $u_{\rm s}\in\Rset^5$,
and finally the size~$\delta$ of the spatial discretization mesh,
which plays the role of a fidelity parameter.
The reader if referred to \cite{stroh2017assessing}
and~\citet{stroh:thesis} for more details on the application.

In this example, our objective is to estimate the probability that $V$
becomes less than $z^{\rm crit}$ in a particular fire scenario,
defined by $\alpha = 0.1057\,\text{kW}\cdot\text{s}^{-2}$,
$A_{\rm f} = 14\,\text{m}^2$,
$\dot{Q}_{\rm h}'' = 460\,\text{kW}\cdot\text{m}^{-2}$,
$q_{\rm f_q} = 450\,\text{MJ}\cdot\text{m}^{-2}$, and
$Y_{\rm soot} = 0.027\,\text{kg}\cdot\text{kg}^{-1}$.
The environmental inputs~$u_{\rm e}$ are assumed random and
integrated according to an environmental distribution~$\Fenv$,
which is a trivariate normal distribution with mean
$(10\,^{\circ}\text{C}, 100\,\text{kPa}, 22.5\,^{\circ}\text{C})$,
variances equal to
$(20/3\,^{\circ}\text{C},2/3\,\text{kPa}, 2.5\,^{\circ}\text{C})^2$,
and a correlation coefficient of 0.8 between the temperatures.
The QoI is
$Q = \int_{\Uset_{\rm e}} p(u_{\rm e}, u_{\rm s}, \deltaref)\, \ddiff
\Fenv(u_{\rm e})$,
where $\deltaref = 20\,\text{cm}$ is the reference level and
$p(u_{\rm e}, u_{\rm s}, \deltaref) = \prob\left(V < \zcrit\vert
  u_{\rm e}, u_{\rm s}, \deltaref\right)$.

Four levels of fidelity will be considered for running simulations:
the reference level $\deltaref = 20\,\text{cm}$,
and three levels of lower fidelity ($\delta = 50\,\text{cm}$,
$33\,\text{cm}$ and $25\,\text{cm}$).
Table~\ref{tab:fds_levels} shows the correspondence between levels and
computation times.
Four independent initial NLHS designs of size~$n = 130$, distributed
across the first three levels of fidelity as shown in
Table~\ref{tab:fds_levels}, are available from previous studies.
The normalized cost of each initial design is 9.89 (i.e., 20 days).
A reference value for $Q$ has also been obtained from 150 Monte Carlo
simulations, distributed on the highest fidelity level using $\Fenv$.
This reference value has a normalized cost of 150 (i.e., about ten
months).

We run the MR-SUR strategy starting from our four initial designs,
using a supplementary budget of 24 for each run (about 48.7 days).
The underlying Bayesian model is the same as in
Section~\ref{subsec:illustration_dampedOscillator},
the QoI $Q$ is estimated using the posterior mean
$\hat Q_n = \esp_{n}(Q)$,
and the measure of uncertainty is the integrated posterior variance
\begin{equation*}
  H_n = \int \var_n \left( %
    p(u_{\rm e}, u_{\rm s}, \deltaref) %
  \right)\, \ddiff \Fenv(u_{\rm e}),
\end{equation*}
which is a special case of~\eqref{equ:Hn-L2}.
The corresponding SUR criterion is similar
to~\eqref{eq:notre-SUR-crit}, with an integral over the environmental
variables only.
The result is shown on Figure~\ref{fig:ex3_res}.
We can see on Figure~\ref{fig:ex3_res_estimVI} that the estimations
are initially, in three out of four cases, incompatible with the Monte
Carlo one, but tend to get closer to the reference value when more
simulations are carried out using the MR-SUR strategy.
Figure~\ref{fig:ex3_res_uncerVI} shows the measure of uncertainty as a
function of the cost:
the uncertainty is large at the beginning of the sequential design and
rapidly becomes smaller, as expected, as the MR-SUR strategy proceeds.
(Note that the cost of the whole design is approximately
$9.9 + 24 = 33.9$, which is must cheaper than the cost of~150 of the
Monte Carlo reference.)

\begin{figure}
  \begin{center}
    \psfrag{Cost}[tc][tc]{$c_n$}
    \psfrag{Phat}[bc][bc]{$\widehat Q_n$}
    \psfrag{sqrt(Hn)}[bc][bc]{$\sqrt{H_n}$}
    \psfrag{Estimations of the probability}[bc][bc]{}
    \psfrag{Measure of uncertainty}[bc][bc]{}

    \subfloat[Estimations of the probability]{
      \includegraphics[width=0.45\textwidth]{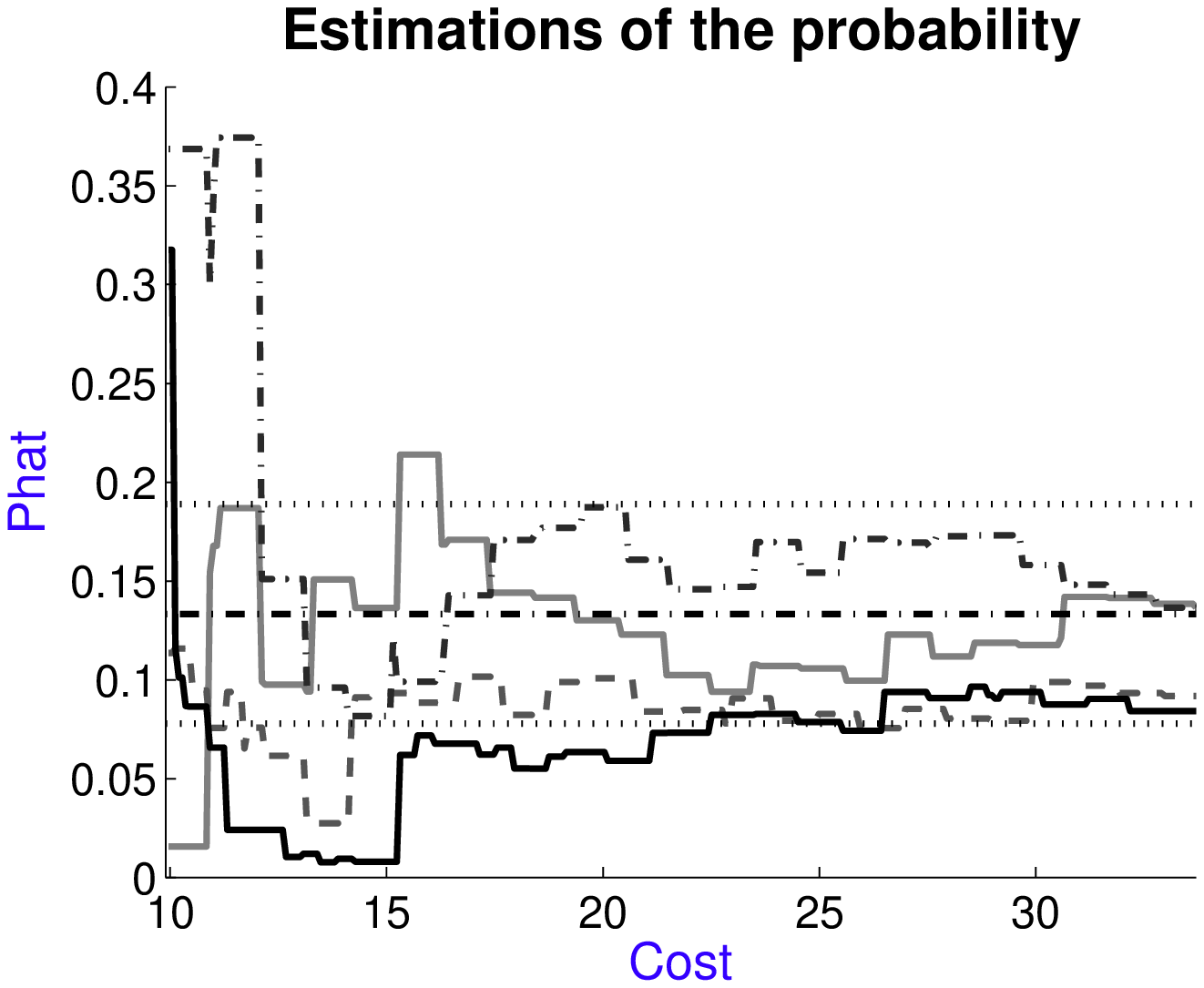}
      \label{fig:ex3_res_estimVI}}
    \subfloat[Measures of uncertainty]{%
      \includegraphics[width=0.45\textwidth]{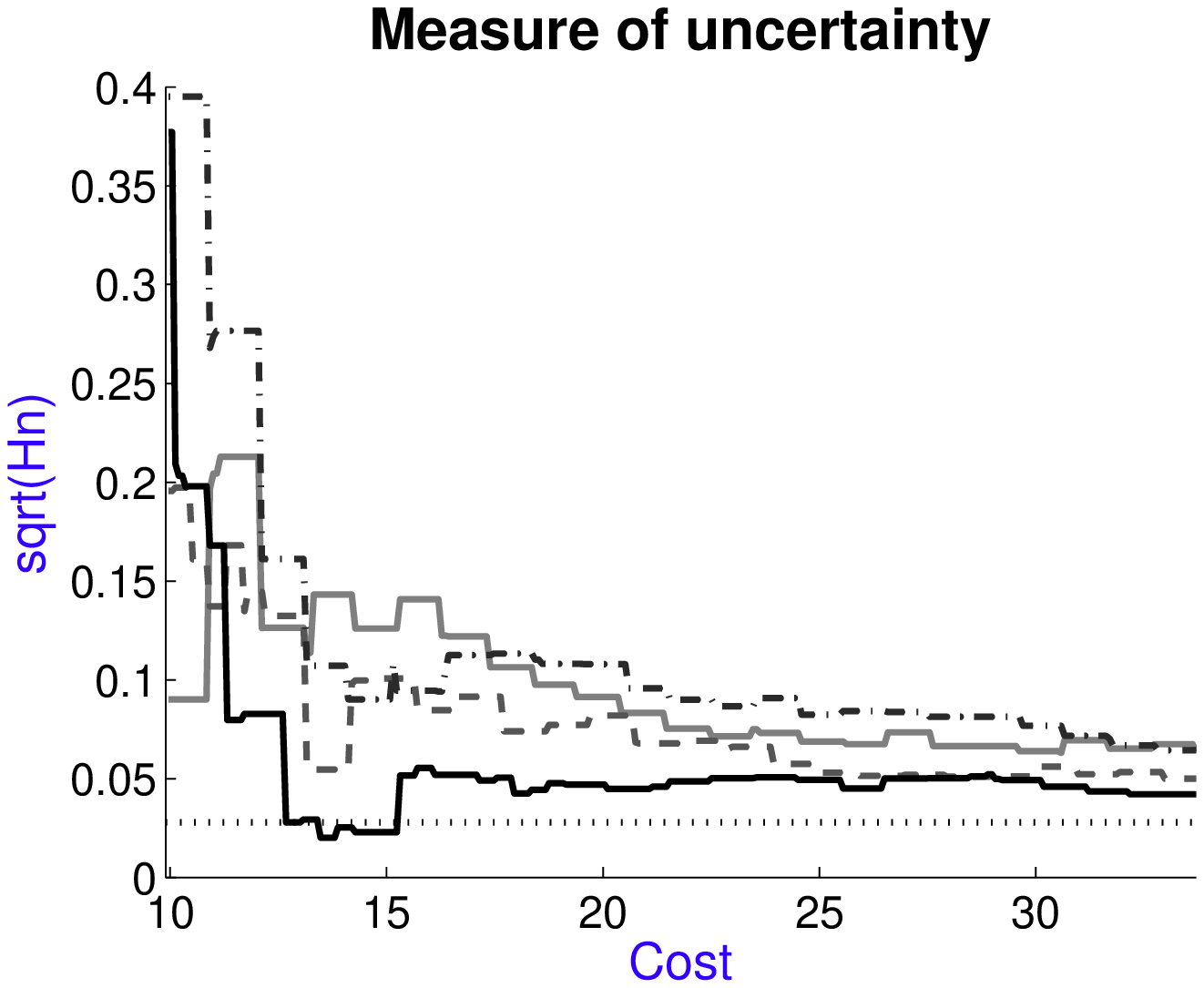}
      \label{fig:ex3_res_uncerVI}%
    }
  \end{center}
  \caption{%
    Result of four repetitions for the fire safety example. %
    (a)~Estimated probability as a function of the cost. %
    The horizontal lines correspond to the Monte Carlo reference
    (dash-dotted line: mean; dotted lines: two-standard-deviation
    interval).
    (b)~Square root of the measure of uncertainty
    (upper bound on the posterior standard deviation of the probability).
    The horizontal dotted line is the Monte Carlo standard deviation.
  }
  \label{fig:ex3_res}
\end{figure}

\section{Conclusion}
\label{sec:conclusion}

The main contribution of this article is to unify and extend several methods of
the literature of Bayesian sequential design of experiments for
multi-fidelity numerical simulators.
The unification that we propose is cast in the framework of Stepwise
Uncertainty Reduction (SUR) strategies:
when the accuracy of computer simulations can be chosen by the user, a
natural extension of SUR strategies is to consider sampling criteria
built as the ratio between the reduction of uncertainty and the cost
of a simulation.
We call this approach Maximal Rate of Stepwise Uncertainty Reduction
(MR-SUR).
It can be applied to deterministic or stochastic simulators.
Our numerical experiments show that the MR-SUR approach typically
provides estimations which, for a given computational cost, are never
much worse, and often better, than the best SUR strategy using a
single level of fidelity.

Further work directions could be considered in the future.
For instance, there is no explicit ingredient in MR-SUR strategies
that tells the procedure to ``learn'' the model, and in particular, to
learn the correlations between the levels of fidelity.
It seems to us that this would be important, particularly when
simulations are very expensive and the simulation budget is very
limited, as in our fire safety application.
Using a fully Bayesian approach would somehow answer this problem, as
the uncertainty about the model would be propagated to the uncertainty
about the QoI.

Another important research direction would be to address parallel
simulations.
What would be a principled approach of resource allocation when
several simulations with different accuracies and different costs can
be conducted at the same time?

\bibliographystyle{apalike}
\bibliography{mrsur}

\newpage \appendix  \let\citet\citeSM

\setcounter{equation}{0}  \renewcommand\theequation {SM\arabic{equation}}
\setcounter{figure}{0}    \renewcommand\thefigure   {SM\arabic{figure}}
\setcounter{table}{0}     \renewcommand\thetable    {SM\arabic{table}}
\setcounter{section}{0}   \renewcommand\thesection  {SM-\arabic{section}}
\setcounter{remark}{0}    \renewcommand\theremark   {SM\arabic{remark}}

\begin{center}
  \large\bf SUPPLEMENTARY MATERIAL
\end{center}

\addtocontents{toc}{\protect\vspace{30pt}}

\section{Introduction}

This document contains supplementary material for the article %
``Sequential design of multi-fidelity computer experiments: %
maximizing the rate of stepwise uncertainty reduction''.
It is organized as follows.
Section~\ref{append:initial_design} provides a short literature review
on non-sequential designs for multi-fidelity, which complements the
literature review on sequential designs given in Section~3.1 of the
article.
Section~\ref{append:surCrit} provides (with proof) a new SUR
criterion, which covers as a special case the criterion provided
(without proof) in Section~3.2 of the article.
Finally, Sections~\ref{sec:SM:1d-example} and~\ref{sec:SM:damped}
provide additional information regarding the examples presented in
Sections~4.2 and~4.3 of the article.

\section{Non-sequential designs in multi-fidelity}
\label{append:initial_design}

In this section, we provide a very brief literature review on
non-sequential designs for multi-fidelity.

A common recommendation for multi-level designs is nesting. %
A multi-level design is nested when any observed point at a
level~$\delta^{(s)}$ is also observed at every lower-fidelity
level~$\delta^{(s')}$, $s' < s$. %
Furthermore, space-filling designs are also expected to ensure
observations in the whole input domain, as usual in Gaussian process
regression.

A simple method to create a nested design is proposed by
\citet{forrester2007multi}. %
It draws a maximin Latin Hypercube Sampling (LHS) at the lowest
fidelity design, and then selects subsets of this LHS at the next
levels. %
\citet{le2014recursive} suggest to start with the highest-fidelity
level and add points on the lower-fidelity levels to ensure better
space-filling properties. %
The reader is also referred to \citet{rennen2010nested} for a method
which applies when there is only two levels of fidelity.

In our work, we use the method proposed by \citet{qian2009nested},
which construct Nested Latin Hypercube Sampling (NLHS). %
An NLHS is a nested design with the property of being an LHS at each
level of fidelity. %
We add a maximin optimization step to ensure that the design is
space-filling at each level.

Note that this method was extended in several directions. %
\citet{he2011nested}, \citet{yang2014construction},
\citet{guo2017construction}, and \citet{xu2017general} propose methods
to build NLHS with particular structures, such as orthogonality.

\section{A new SUR criterion}
\label{append:surCrit}

\subsection{Criterion definition and result statement}

\newcommand \Pe {\prob_e}

Let $\xi = \left( \xi(x) \right)_{x \in \Xset}$ denote a Gaussian
process prior for the mean response of a stochastic simulator with
Gaussian responses---i.e., a simulator which produces random responses
\begin{equation*}
  Z \mid \xi \;\sim\; \mathcal{N} \left( \xi(x), \lambda(x) \right),
\end{equation*}
where $x \in \Xset$ denotes the vector of inputs of the simulator and
$\lambda: \Xset \to \left[ 0, +\infty \right)$ is a known variance
function.
Assume that the quantity of interest is the probability function
$\alpha: \Xset \to \left[ 0, 1 \right]$ defined by
\begin{equation*}
  \alpha(x) \;=\; \prob\left( Z_x > \zcrit \right),
\end{equation*}
where $\zcrit \in \Rset$ is a given threshold and $Z_x$ denotes a
(future) response of the simulator with $x$ as the input.
More explicitely, we have
\begin{equation}
  \label{equ:alpha-explicit}
  \alpha(x) \;=\; \Phi\left(
    \frac{\xi(x) - \zcrit}{\sqrt{\lambda(x)}}
  \right).
\end{equation}

Let $\widehat\alpha_n(x)$ denote the posterior mean of~$\alpha(x)$
given $n$ responses~$Z_1$, \ldots, $Z_n$ of the simulator at design
points~$X_1, \ldots, X_n \in \Xset$ (possibly selected in a sequential
manner).
Let $\mu$ denote a positive, bounded measure on~$\Xset$,
and consider the measure of uncertainty~$H_n$ defined by
\begin{equation*}
  H_n
  = \esp_n \left(
    \int \left( \alpha(x) - \widehat\alpha(x) \right)^2
    \mu(\ddiff x)
  \right)
  = \int
    \var_n \left( \alpha(x) \right)\, \mu(\ddiff x).
\end{equation*}
The following result provides a tractable expression for the
corresponding SUR criterion for a batch
$\xx = (\xx_l)_{1\leq l\leq q}\in \Xset^q$ of candidate points.
The criterion presented in the article corresponds to the fully
sequential case---i.e., $q = 1$---and to a particular choice of the
measure~$\mu$.

\begin{proposition} \label{prop:sur}
  Let $m_n$ (resp. $k_n$) denote the posterior mean (resp. the
  posterior covariance) of~$\xi$ given $n$ observations.
  Let
  \begin{equation*}
    J_n(\xx) \;=\; \esp\left(
      H_{n+q} \bigm| X_{n+1} = \xx_1,\, \ldots,\, X_{n+q} = \xx_q
    \right)
  \end{equation*}
  and $G_n(\xx) = H_n - J_n(\xx)$.
  Then,
  \begin{equation*}
    J_n(\xx) \;=\; \int \Bigl[ \Phi_2(u_n(x), a_n(x); r_n(x))
    - \Bigl. \Phi_2(a_n(x), a_n(x);
    \widetilde{r}_n(x; \xx)) \Bigr]\,
    \mu(\ddiff x)
  \end{equation*}
  and
  \begin{equation*}
    G_n(\xx) \;=\; \int \Bigl[ \Phi_2(a_n(x), a_n(x);
      \widetilde{r}_n(x; \xx)) \Bigr.
      - \Bigl. \Phi(a_n(x))^2 \Bigr]\, \mu(\ddiff x),
  \end{equation*}
  with $\Phi$ the cdf of the standard normal distribution,
  $\Phi_2 \left( \cdot, \cdot \,; \rho \right)$ the cdf of the
  standard bivariate normal distribution with correlation~$\rho$, and
  \newcommand\tempVP{\vphantom{\Bigm|}}
  \begin{equation*}
    \begin{array}{rcl|rcl}
      \tempVP
      a_n(x)
      & = & \left( m_n(x) - \zcrit \right) / \sqrt{v_n(x)}, \;
      & v_n(x)
      & = & k_n(x, x) + \lambda(x),
      \\
      \tempVP
      r_n(x)
      & = & k_n(x, x) / v_n(x),
      & \; \widetilde{r}_n(x; \xx)
      & = & \nu_n(x, x; \xx) / v_n(x),
      \\
      \tempVP
      \nu_n(x, x';\xx)
      & = &
      \multicolumn{4}{l}{
      k_n(x, x') - k_{n+q}(x, x'; \xx)
            = k_n(\xx, x)^TK_n(\xx, \xx)^{-1}k_n(\xx, x'),}
      \\
      \tempVP
      k_n(\xx, x)
      & = & \left(k_n(\xx_l, x)\right)_{1\leq l\leq q},
      & K_n(\xx, \xx)
      & = & \left(k_n(\xx_l, x_{l'})
        + \lambda(\xx_l)\cdot \delta_{l=l'})\right)_{1\leq l,l'\leq q}.
    \end{array}
  \end{equation*}
\end{proposition}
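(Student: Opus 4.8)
The plan is to reduce the whole computation to two elementary Gaussian identities applied pointwise in~$x$, and then to exploit the martingale structure of the posterior mean to pass from stage~$n$ to stage~$n+q$. First I would fix $x \in \Xset$ and compute the two moments of $\alpha(x)$ under the stage-$n$ posterior, under which $\xi(x) \sim \N(m_n(x), k_n(x,x))$. For the mean I would use the standard identity $\esp\bigl[\Phi((Y - c)/s)\bigr] = \Phi\bigl((m - c)/\sqrt{s^2 + \sigma^2}\bigr)$, valid for $Y \sim \N(m, \sigma^2)$; with $c = \zcrit$ and $s = \sqrt{\lambda(x)}$ this gives $\widehat\alpha_n(x) = \Phi(a_n(x))$. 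For the second moment I would set $W = (\xi(x) - \zcrit)/\sqrt{\lambda(x)}$, so that $\alpha(x) = \Phi(W)$, and write $\Phi(W) = \prob(U \le W \mid W)$ with $U \sim \N(0,1)$ independent of $W$; then $\esp_n[\Phi(W)^2] = \prob(U_1 \le W,\, U_2 \le W)$ for independent copies $U_1, U_2$, and since $(U_1 - W, U_2 - W)$ is bivariate Gaussian this equals $\Phi_2\bigl(a_n(x), a_n(x); r_n(x)\bigr)$ after routine variance bookkeeping. Subtracting the square of the mean yields $\var_n(\alpha(x)) = \Phi_2(a_n(x), a_n(x); r_n(x)) - \Phi(a_n(x))^2$, and integrating against~$\mu$ gives $H_n$.

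For $J_n$ I would not compute $\var_{n+q}$ directly but instead use the tower property: since $\esp_n\bigl[\esp_{n+q}(\alpha(x)^2)\bigr] = \esp_n(\alpha(x)^2) = \Phi_2(a_n(x), a_n(x); r_n(x))$, one gets $\esp_n\bigl[\var_{n+q}(\alpha(x))\bigr] = \Phi_2(a_n(x), a_n(x); r_n(x)) - \esp_n\bigl[\widehat\alpha_{n+q}(x)^2\bigr]$, so that only the term $\esp_n[\Phi(a_{n+q}(x))^2]$ remains to be evaluated. The key structural fact I would invoke is that, for a Gaussian process with known noise variance~$\lambda$, the posterior covariance $k_{n+q}$ after observing the batch~$\xx$ is a deterministic function of the design alone; hence $v_{n+q}(x)$ and $r_{n+q}(x)$ are constants under the stage-$n$ conditional law, and the only randomness in $a_{n+q}(x) = (m_{n+q}(x) - \zcrit)/\sqrt{v_{n+q}(x)}$ comes from $m_{n+q}(x)$. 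The martingale identity $\cov_n(m_{n+q}(x), m_{n+q}(x')) = k_n(x,x') - k_{n+q}(x,x') = \nu_n(x,x';\xx)$ then tells me that, under the stage-$n$ posterior, $m_{n+q}(x) \sim \N(m_n(x), \nu_n(x,x;\xx))$.

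It then remains to apply the same second-moment identity a second time, now to $W = a_{n+q}(x)$, which under the stage-$n$ posterior is Gaussian with mean $(m_n(x) - \zcrit)/\sqrt{v_{n+q}(x)}$ and variance $\nu_n(x,x;\xx)/v_{n+q}(x)$. The computation collapses thanks to the algebraic identity $v_{n+q}(x) + \nu_n(x,x;\xx) = v_n(x)$, which follows at once from $\nu_n(x,x;\xx) = k_n(x,x) - k_{n+q}(x,x)$ and $v_{n+q}(x) = k_{n+q}(x,x) + \lambda(x)$; after simplification the two equal arguments become $a_n(x)$ and the correlation becomes $\nu_n(x,x;\xx)/v_n(x) = \widetilde r_n(x;\xx)$, giving $\esp_n[\widehat\alpha_{n+q}(x)^2] = \Phi_2(a_n(x), a_n(x); \widetilde r_n(x;\xx))$. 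Substituting this into the expression for $\esp_n\bigl[\var_{n+q}(\alpha(x))\bigr]$ and integrating (the interchange of $\esp_n$ and the integral against~$\mu$ is justified by Fubini, since $\mu$ is bounded and all integrands lie in $[0, \tfrac14]$) yields the stated formula for $J_n$, and then $G_n = H_n - J_n$ follows immediately by cancelling the common $\Phi_2(a_n, a_n; r_n)$ term. I expect the main obstacle to be the batch-level bookkeeping in the two places where the $\Phi_2$ identity is used---keeping track of which quantities are deterministic given~$\xx$ and verifying the collapsing identity $v_{n+q} + \nu_n = v_n$---rather than any deep analytic difficulty; once the deterministic-covariance fact is in hand, the rest is algebra.
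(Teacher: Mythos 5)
Your proposal is correct and follows essentially the same route as the paper's proof: the same two Gaussian identities (the paper's Lemma~SM1 and its Corollary), the same tower-property decomposition of $\esp_n[\var_{n+q}(\alpha(x))]$, the same observation that under $\prob_n$ the only randomness in $a_{n+q}(x)$ comes from $m_{n+q}(x)\sim\Ncal(m_n(x),\nu_n(x,x;\xx))$, and the same collapsing identity $v_{n+q}(x)+\nu_n(x,x;\xx)=v_n(x)$. The only cosmetic difference is that you inline the proof of the second-moment identity (via the independent auxiliary normals) where the paper isolates it as a separate lemma.
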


\bigskip

\begin{remark}
  In the expressions of~$J_n(\xx)$ and~$G_n(\xx)$, the only
  part which depends on the future design $\xx$ is
  $\int \Phi_2(a_n(x), a_n(x); \widetilde{r}_n(x; \xx))\,
  \mu(\ddiff x)$, which must be maximized.
\end{remark}

\subsection{A useful identity}

\newcommand \wtPhi {\widetilde\Phi}

Let $\wtPhi_d\left(\, \cdot \;; m, K \right)$ denote the cumulative
distribution function of the $d$-variate normal distribution with
mean~$m$ and covariance matrix~$K$.
The following identity is used by \citet{chevalier2014fast} for the
computation of SUR criteria similar to ours.

\begin{lemma} \label{lem:trick}
  Let $W \sim \Ncal (m, K)$ be a $d$-dimensional Gaussian vector.
  Then, for any mean vector~$m'$ and covariance matrix~$K'$,
  \begin{equation*}
    \esp\left( \wtPhi_d(W; m', K')\right) = \wtPhi_d(m; m', K + K').
  \end{equation*}
\end{lemma}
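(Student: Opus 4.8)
The plan is to give a short probabilistic argument, recognizing the Gaussian distribution function $\wtPhi_d(\cdot\,; m', K')$ as the law of an auxiliary Gaussian vector and then reducing both sides of the claimed identity to the probability that one and the same Gaussian vector is nonpositive in every coordinate.

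First I would introduce an auxiliary random vector $V \sim \Ncal(m', K')$, taken independent of $W$. By the very definition of the multivariate normal distribution function, for each fixed value $w$ we have $\wtPhi_d(w; m', K') = \prob(V \le w)$, where $\le$ denotes componentwise inequality. Conditioning on $W$ and taking expectations then gives
\[
  \esp\bigl( \wtPhi_d(W; m', K') \bigr)
  = \esp\bigl( \prob(V \le W \mid W) \bigr)
  = \prob(V \le W),
\]
by the tower property, the probability on the right being taken under the joint law of the independent pair $(V, W)$.

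Next I would rewrite the event $\{ V \le W \}$ as $\{ V - W \le 0 \}$ and use the fact that a difference of independent Gaussian vectors is again Gaussian, with means subtracting and covariances adding: $V - W \sim \Ncal(m' - m,\, K + K')$. Hence $\prob(V \le W)$ equals the probability that a $\Ncal(m' - m, K + K')$ vector lies in the negative orthant, i.e. $\wtPhi_d(0; m' - m, K + K')$. For the right-hand side of the lemma I would apply exactly the same reduction: writing $\wtPhi_d(m; m', K + K') = \prob(Y \le m)$ with $Y \sim \Ncal(m', K + K')$ and $\{ Y \le m \} = \{ Y - m \le 0 \}$, and noting $Y - m \sim \Ncal(m' - m, K + K')$, this too is the probability that a $\Ncal(m' - m, K + K')$ vector is nonpositive in every coordinate. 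The two sides therefore coincide.

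There is no genuine obstacle here: the identity follows at once from closure of the Gaussian family under (independent) differences. The only points requiring a little care are the bookkeeping of signs when passing from $\{ V \le W \}$ to $\{ V - W \le 0 \}$, and the independence of the auxiliary vector $V$ from $W$, which is what makes the covariances add cleanly; no nondegeneracy assumption on $K$ or $K'$ is needed. An alternative, purely analytic route would express both sides as Gaussian integrals and invoke the fact that a convolution of Gaussian densities is Gaussian, with Fubini used to interchange the order of integration, but the probabilistic argument above is shorter and more transparent.
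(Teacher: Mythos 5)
Your proof is correct and follows essentially the same route as the paper's: introduce an independent auxiliary vector $V \sim \Ncal(m', K')$, use the tower property to rewrite the left-hand side as $\prob(V \le W)$, and conclude by closure of the Gaussian family under independent differences. The only cosmetic difference is that you reduce both sides to the negative-orthant probability of a $\Ncal(m'-m, K+K')$ vector, whereas the paper shifts by $m$ to land directly on $\wtPhi_d(m; m', K+K')$; the argument is the same.
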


\begin{proof}
  Let $W' \sim \mathcal{N}(m', K')$ be independent from~$W$.  Then
  \begin{equation*}
    \esp\left( \wtPhi_d(W, m', K')\right)
    \;=\; \esp\left( \prob\left( W' \leq W \bigm| W \right) \right)
    \;=\; \prob(W' \leq W),
  \end{equation*}
  and, using that $W' - W = W'' - m$ with $W'' \sim \Ncal(m', K + K')$,
  \begin{equation*}
    \prob(W' \leq W)
    = \prob(W'' \leq m)
    = \wtPhi_d(m; m', K + K').   \vspace{-2.6em}
  \end{equation*}
\end{proof}

\begin{corollary} \label{cor:trick}
  Let $m, m' \in \Rset$, $v, v' \in \left[ 0, +\infty \right)$ and
  $W \sim \Ncal (m, v)$.  Then
  \begin{align*}
    \esp\left( \Phi\left( \frac{W - m'}{\sqrt{v'}} \right) \right)
    & \;=\; \Phi\left( \frac{m - m'}{\sqrt{v + v'}} \right)\\
    \intertext{and}
    \esp\left( \Phi\left( \frac{W - m'}{\sqrt{v'}} \right)^2 \right)
    & \;=\; \Phi_2\left( %
      \frac{m - m'}{\sqrt{v + v'}},
      \frac{m - m'}{\sqrt{v + v'}};
      \frac{v}{v + v'}
      \right).
  \end{align*}
\end{corollary}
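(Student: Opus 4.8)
The plan is to obtain both identities as special cases of Lemma~\ref{lem:trick}; the only real work is to recognize the integrand $\Phi\bigl((W - m')/\sqrt{v'}\bigr)$ and its square as multivariate normal cdfs of the form $\wtPhi_d$. The starting observation is that, for every $w \in \Rset$, one has $\Phi\bigl((w - m')/\sqrt{v'}\bigr) = \wtPhi_1(w; m', v')$, since the right-hand side is by definition the value at $w$ of the cdf of a univariate $\Ncal(m', v')$ variable.

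For the first identity I would simply apply Lemma~\ref{lem:trick} in dimension $d = 1$, with $K = v$ and $K' = v'$. This gives $\esp\bigl(\wtPhi_1(W; m', v')\bigr) = \wtPhi_1(m; m', v + v') = \Phi\bigl((m - m')/\sqrt{v + v'}\bigr)$, which is exactly the claimed equality.

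For the second identity the key step is to rewrite the square as a genuinely bivariate cdf evaluated on the diagonal. Letting $Y_1, Y_2$ be independent $\Ncal(m', v')$ variables, for every $w$ we have $\Phi\bigl((w - m')/\sqrt{v'}\bigr)^2 = \prob(Y_1 \le w)\,\prob(Y_2 \le w) = \prob(Y_1 \le w,\, Y_2 \le w) = \wtPhi_2\bigl((w, w)^T; (m', m')^T, v' I_2\bigr)$, where $I_2$ is the $2 \times 2$ identity matrix. I would then apply Lemma~\ref{lem:trick} in dimension $d = 2$ to the Gaussian vector $(W, W)^T$, whose mean is $(m, m)^T$ and whose covariance $K$ is the $2 \times 2$ matrix with all entries equal to $v$, taking $K' = v' I_2$. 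The lemma yields $\esp\bigl(\Phi((W - m')/\sqrt{v'})^2\bigr) = \wtPhi_2\bigl((m, m)^T; (m', m')^T, K + K'\bigr)$, where $K + K'$ has diagonal entries $v + v'$ and off-diagonal entries $v$. A final standardization---dividing each argument by the common standard deviation $\sqrt{v + v'}$ and reading off the correlation $v/(v + v')$---rewrites this as $\Phi_2\bigl((m - m')/\sqrt{v + v'},\, (m - m')/\sqrt{v + v'};\, v/(v + v')\bigr)$, as required.

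The one delicate point, which I expect to be the main (though minor) obstacle, is that the vector $(W, W)^T$ has a rank-one, hence singular, covariance matrix. The proof of Lemma~\ref{lem:trick}, however, relies only on the distributional identity $W' - W \sim \Ncal(m' - m, K + K')$, which holds for singular $K$ as well, so the lemma applies unchanged. The boundary values $v' = 0$ and $v + v' = 0$ are handled by the standard convention that a degenerate normal cdf is the corresponding indicator function, under which both sides of each identity reduce consistently to tail probabilities of $W$.
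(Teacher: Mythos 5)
Your proposal is correct and follows exactly the route the paper intends (the paper states the corollary without proof, as an immediate consequence of Lemma~\ref{lem:trick}): the first identity is the lemma with $d=1$, and the second is the lemma with $d=2$ applied to the degenerate vector $(W,W)^T$ with covariance $\bigl(\begin{smallmatrix} v & v \\ v & v \end{smallmatrix}\bigr)$ and $K' = v' I_2$, followed by standardization. Your remark that the lemma's proof only uses independence of $W$ and $W'$, hence tolerates a singular $K$, correctly disposes of the one subtlety.
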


\subsection{Proof of Proposition~\ref{prop:sur}}

\begin{proof}

  Recall from~\eqref{equ:alpha-explicit} that
  \begin{equation*} \label{equ:alpha-explicit-recall}
    \alpha(x) \;=\; \Phi\left(
      \frac{\xi(x) - \zcrit}{\sqrt{\lambda(x)}}
    \right).
  \end{equation*}
  It follows from Corollary~\ref{cor:trick} that
  \begin{equation} \label{equ:esp-alpha-x}
    \esp_{n+q} \left( \alpha(x) \right)
    \;=\;
    \Phi\left( \frac{m_{n+q}(x) - \zcrit}{%
        \sqrt{\lambda(x) + k_{n+q}(x, x)}} \right)
    \;=\; \Phi\left( \frac{m_{n+q}(x) - \zcrit}{%
        \sqrt{v_{n+q}(x)}} \right),
  \end{equation}
  and thus
  \begin{equation} \label{equ:var-q-puls-q}
    \var_{n+q} \left( \alpha(x) \right)
    \;=\;
    \esp_{n+q} \left( \alpha(x)^2 \right)
    - \Phi\left( \frac{m_{n+q}(x) - \zcrit}{%
        \sqrt{v_{n+q}(x)}} \right)^2.
  \end{equation}

  \medbreak

  Let us now compute separately the expectation with respect
  to~$\prob_n$ of the two terms in the right-hand side
  of~\eqref{equ:var-q-puls-q}.
  For the first term we have
  \begin{align}
    \esp_n \left(  \esp_{n+q} \left( \alpha(x)^2 \right) \right)
    & \;=\; \esp_n \left( \alpha(x)^2 \right)
      \;=\; \esp_n \left( \Phi \left( %
        \frac{\xi(x) - \zcrit}{\sqrt{\lambda(x)}}
      \right)^2 \right) \nonumber\\
    & \;=\; \Phi_2\left( a_n(x), a_n(x); r_n(x) \right),
      \label{equ:first-term}
  \end{align}
  where we have applied the second part of Corollary~\ref{cor:trick}
  with $m = m_n(x)$, $v = k_n(x,x)$, $m' = \zcrit$, and
  $v' = \lambda(x)$.
  For the second term we observe that, under~$\prob_n$, $m_{n+q}$~is a
  Gaussian process with mean~$m_n$ and covariance
  function~$\nu_n( \,\cdot,\, \cdot\, ; \xx )$.
  Therefore
  $m_{n+q}(x) \sim \Ncal \left( m_n(x), \nu(x,x; \xx \right)$,
  and it follows that
  \begin{equation}
    \esp_n \left(
      \Phi\left( \frac{m_{n+q}(x) - \zcrit}{%
          \sqrt{v_{n+q}(x)}} \right)^2
    \right) \;=\; \Phi_2 \left(
      a_n(x), a_n(x); \widetilde r_n(x) \right),
    \label{equ:second-term}
  \end{equation}
  where we have used again the second part of
  Corollary~\ref{cor:trick} with $m = m_n(x)$,
  $v = \nu_n(x,x; \xx)$, $m' = \zcrit$ and
  $v' = v_{n+q}(x)$.
  Indeed,
  \begin{align*}
    v + v' & \;=\; %
             \nu_n(x,x; \xx) + v_{n+q}(x)\\
    & \;=\; %
    \left( k_n(x, x) - k_{n+q}(x,x) \right)
    + \left( k_{n+q}(x,x) + \lambda(x) \right)
    \;=\; v_n(x),
  \end{align*}
  therefore
  \begin{align*}
     \frac{m - m'}{\sqrt{v + v'}}
     & \;=\; \frac{ m_n(x) - \zcrit}{\sqrt{v_n}}
       \;=\; a_n(x),\\
    \frac{v}{v + v'}
     & \;=\; \frac{ \nu_n(x,x; \xx)}{v_n(x)}
       \;=\; \widetilde r_n(x).
  \end{align*}
  Combining \eqref{equ:var-q-puls-q}--\eqref{equ:second-term} and
  integrating on~$\Xset$ with respect to~$\mu$ yields the desired
  expression for~$J_n(x)$.
  Similarly, combining~\eqref{equ:esp-alpha-x} with $q = 0$
  and~\eqref{equ:first-term} we have
  \begin{equation*}
    H_n \;=\; \int \left(
      \Phi_2\left( a_n(x), a_n(x); r_n(x) \right)
      - \Phi\left( a_n(x) \right)^2
    \right)\, \mu(\ddiff x)
  \end{equation*}
  and the expression of~$G_n(\xx)$ follows.
\end{proof}

\section{One-dimensional example}
\label{sec:SM:1d-example}

This section provides additional information regarding the Bayesian
model used in the ``One-dimensional example'' presented in Section~4.2
of the main article.

The model used in this example is the one proposed by
\citet{kennedy2000predicting} and reviewed in Section~2.1 of the
article, with $S = 2$ levels.
The two independent Gaussian processes~$\eta_1$ and~$\eta_2$ are
stationary processes with unknown constant means and Matérn
covariance functions with regularity 5/2:
\begin{equation*}
  \eta_s \sim \GP \left( %
    m_s,\, \sigma^2_s\matern_{5/2}\left(a_s(\cdot - \cdot)\right)\right),
  \qquad s \in \{ 1, 2 \}.
\end{equation*}
Independent priors are used for all the remaining hyper-parameters of
the model:
\begin{itemize}
\item Improper uniform prior distributions on~$\Rset$ are used for the
  means~$m_s$.
\item The parameters of the covariance function follow log-normal
  distributions:
  \begin{align*}
    \log(\sigma_s^2) & \;\sim\; \Ncal(2\log(0.2), \log(100)^2),\\
    \log(a_s) & \;\sim\; \Ncal(\log(2), \log(10)^2).
  \end{align*}
\item Finally, the regression term between the two levels follows a
  normal prior distribution:
  \begin{equation*}
    \rho \;\sim\; \Ncal(1, 2^2).  
  \end{equation*}
\end{itemize}

\section{Random damped harmonic oscillator}
\label{sec:SM:damped}

This section provides additional information regarding the ``Random
damped harmonic oscillator'' example (Section~4.3 of the main
article).

\subsection{Explicit exponential Euler scheme}

Consider a stochastic equation
\[
\D{X_t} = AX_t\D{t} + b\D{W_t},
\]
where $X$ is a stochastic vector, $W$ is a Wiener process,
$b$ a real matrix and $A$ a matrix.
Let $\delta$ a time step, we would like to approximate
$X$ by a finite difference method:
$\widehat{X^{(\delta)}_n}\approx X(\delta n)$.
The explicit exponential Euler scheme is a finite
difference method proposed by \citet{jentzen2009overcoming}
to ensure stability when approximating a stochastic equation.
The method is to apply recursively the formula
\[
\widehat{X^{(\delta)}_{n+1}} = \exp\left(A\delta\right)
\left[\widehat{X^{(\delta)}_{n}} + \sqrt{2\pi S\delta}\cdot b \cdot U\right]
\]
with $S$ the spectral intensity of the Brownian motion,
and $U$ a normal random vector.

In particular, for the application
of the section~4.2,
(14) can be rewritten
\[
\D{\begin{pmatrix}
X_t\\\dot{X_t}\\
\end{pmatrix}} =
\begin{pmatrix}
0 & 1\\-\omega_0^2 & -2\zeta\omega_0\\
\end{pmatrix}
\begin{pmatrix}
X_t\\\dot{X_t}\\
\end{pmatrix}\D{t}
+ \begin{pmatrix}
0\\1\\
\end{pmatrix}\D{W_t}.
\]
Consequently, the approximation with a finite time-step $\delta$
is
\[
\begin{pmatrix}
\widehat{X^{(\delta)}_{n+1}} \\
\widehat{V^{(\delta)}_{n+1}} \\
\end{pmatrix} =
\exp\begin{pmatrix}
0 & \delta\\
-\omega_0^2\delta & -2\zeta\omega_0\delta\\
\end{pmatrix}
\left[
\begin{pmatrix}
\widehat{X^{(\delta)}_{n}} \\
\widehat{V^{(\delta)}_{n}} \\
\end{pmatrix}
+ \begin{pmatrix}
0\\ \sqrt{2\pi S\delta}u\\
\end{pmatrix}\right], \quad u\sim \Ncal(0, 1)
\]

\subsection{Supplementary experiment: batches of parallel evaluations}

In this section, we consider a batch-sequential version of the MR-SUR
strategy, where observations are taken in batches of $q\geq 1$ simulations
having the same computational cost.  The sampling criterion for
parallel synchronous evaluations can be written as
\begin{equation}
({u^{(1)}}^\star, \dots, {u^{(q)}}^\star; \delta^\star)
= \argmax_{\left(u^{(l)}\right)_{1\leq l\leq q} \in \Uset^q,
\delta \in \Rset^+}
\frac{H_n - \esp_n\left[H_{n+q}\vert
\underline{X} = \left((u^{(1)}, \delta),\dots, (u^{(q)}, \delta)\right)
\right]}{C(\delta)}.
\end{equation}

We test this parallel version of MR-SUR on the example of the random
damped oscillator.  We compare four sequential DoE: a SUR strategy on
the highest-fidelity level with one new observation at each iteration
($q = 1$); a parallel SUR strategy on the highest-fidelity level with
$q = 5$; an MR-SUR strategy ($q = 1$); and the parallel version of
MR-SUR with $q = 5$. Each experiment is repeated 24 times.

The comparison between the four sequential designs is shown in
Figure~\ref{fig:ex2_batch}, which represents the $\mathbb{L}^2$
estimation error as a function of the (clock-wall) cost of observations.
We can see that the MR-SUR with $q = 5$ is much more efficient than the
MR-SUR with $q = 1$.  Figure~\ref{fig:ex2_batch_cumul} shows the error
as a function of the cumulative cost of all experiments in parallel.
Notice that the parallel version of MR-SUR provides better results than
the SUR strategies. The parallel version of the MR-SUR strategy is
almost as good as the single-evaluation version at the end of the
procedure. These results suggest that the parallel version of the MR-SUR
strategy should be used when it is possible.

\begin{figure}
\begin{center}
\psfrag{Cost}[tc][tc]{Real cost}
\psfrag{||p(x, tHF) - p*(x, tHF)||}[bc][bc]{
$\left\|Q^{\star} - Q_n\right\|$}
\psfrag{Error-L2 on the probability function}[tc][tc]{}
\subfloat[Results as function of the real cost]{
\includegraphics[width = 0.45\textwidth]{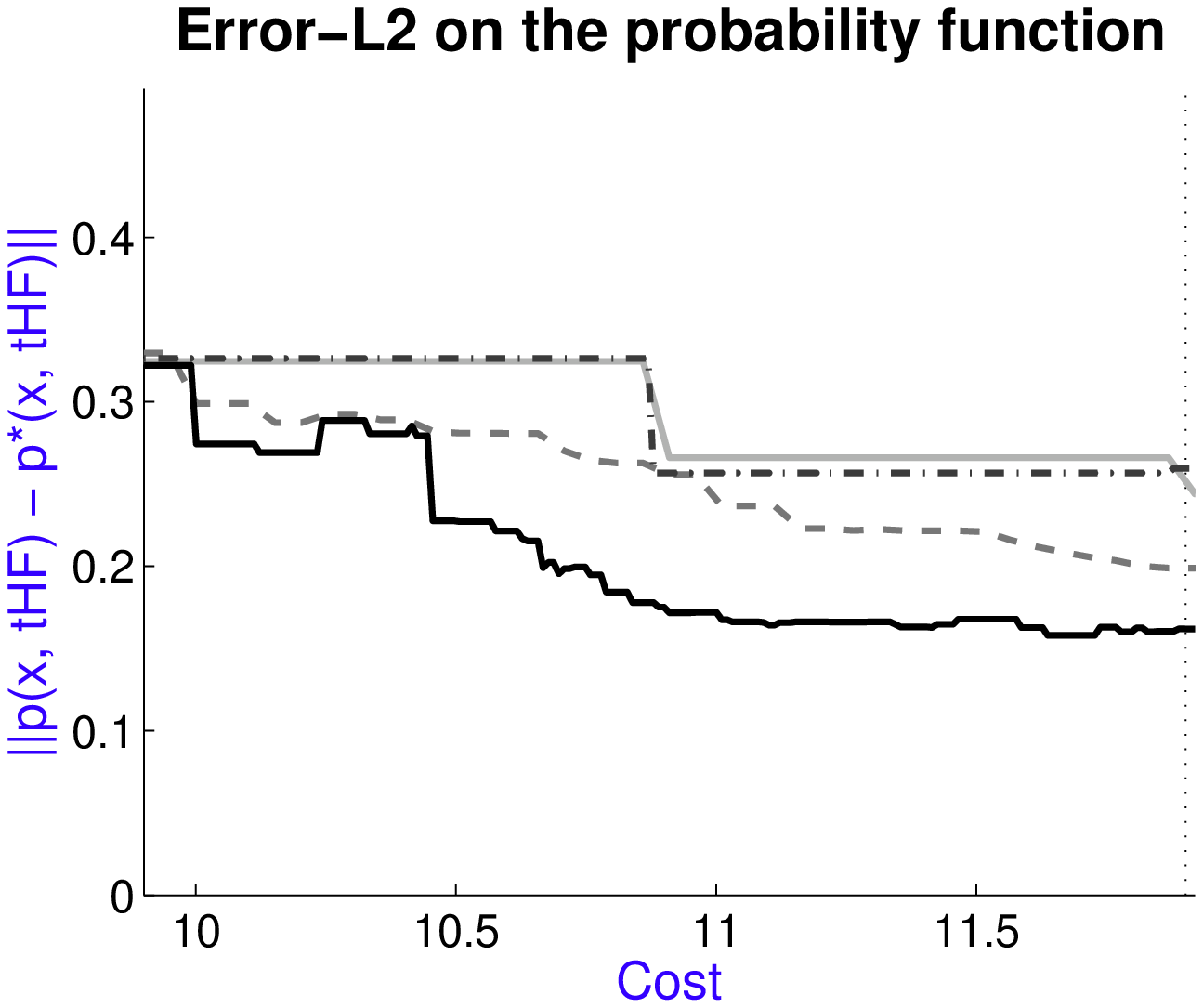}
\label{fig:ex2_batch_real}}
\psfrag{Cost}[tc][tc]{Sum of costs of experiments}
\psfrag{Error-L2 on the probability function}[tc][tc]{}
\subfloat[Results as function of the total cost]{
\includegraphics[width = 0.45\textwidth]{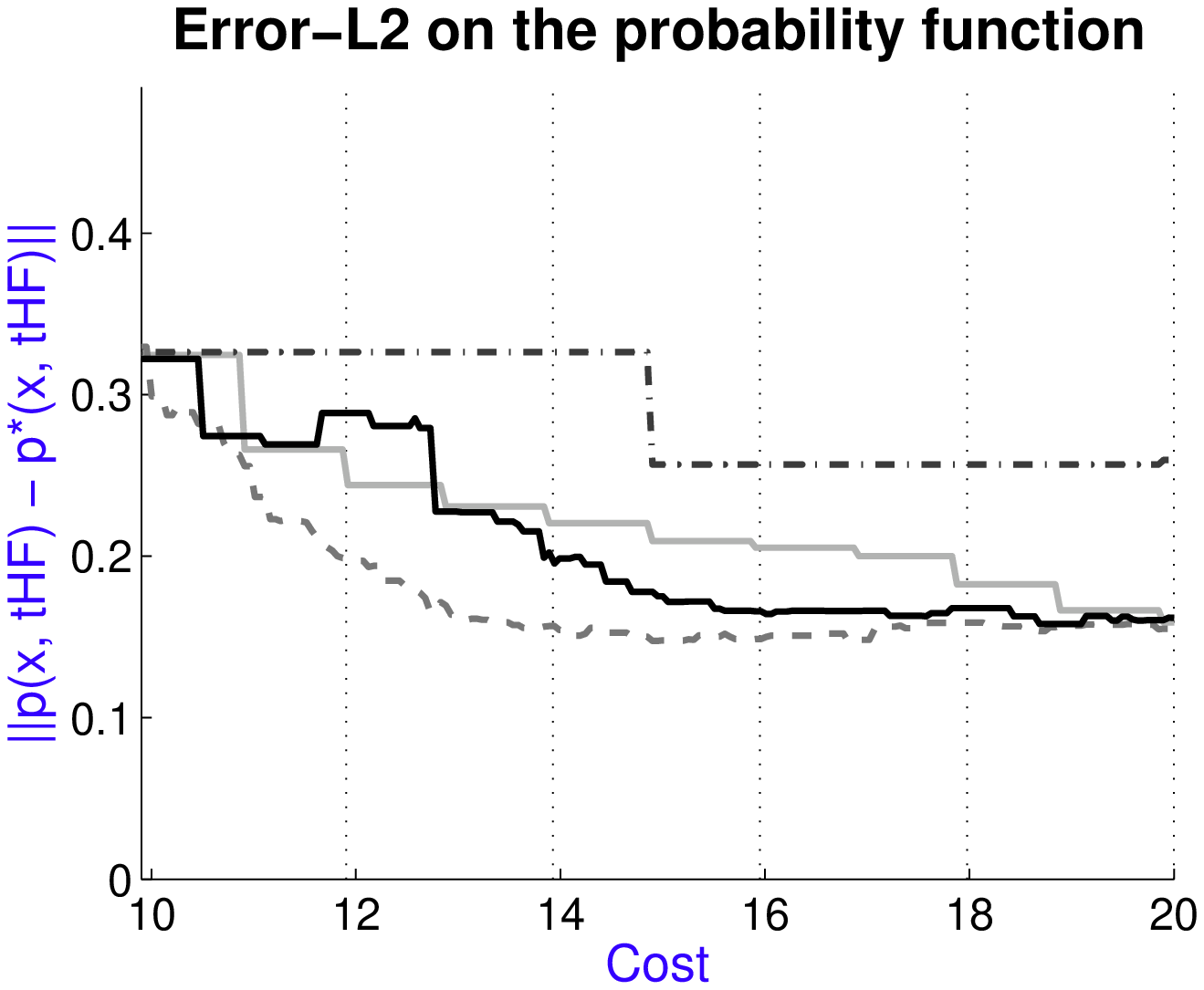}
\label{fig:ex2_batch_cumul}}
\end{center}
\caption{Error between the estimation and the reference value as a
  function of the cost.  Each curve corresponds to one strategy. Light
  solid line: SUR with $q = 1$; gray dashed line: MR-SUR with $q = 1$; dark
  dotted line: SUR with $q = 5$;  black solid line: MR-SUR with $q = 5$.
  (The curves are the median on the 24 repetitions.)}
\label{fig:ex2_batch}
\end{figure}

\bibliographystyleSM{apalike}
\bibliographySM{mrsur}

\end{document}